\newtheorem{theorem}{Theorem}[section]
\newtheorem{lemma}[theorem]{Lemma}
\newtheorem{prop}[theorem]{Proposition}
\newtheorem{dfn}[theorem]{Definition}
\newtheorem{remark}[theorem]{Remark}
\newcommand{\overbar}[1]{\mkern 1.5mu\overline{\mkern-1.5mu#1\mkern-1.5mu}\mkern 1.5mu}
\def\a{\mathbf{a}}
\def\b{\mathbf{b}}
\def\d{\mathbf{d}}
\def\y{\mathbf{y}}
\def\A{\mathbf{A}}
\def\F{\mathbf{F}}
\def\f{\mathbf{f}}
\def\G{\mathbf{G}}
\def\g{\mathbf{g}}
\def\D{\mathbf{D}}
\def\mD{\mathcal{D}}
\def\one{\mathbf{1}}
\def\x{\mathbf{x}}
\def\xs{\mathbf{x}^{\star}}
\def\xr{\hat{\mathbf{x}}}
\def\z{\mathbf{z}}
\def\zs{\mathbf{z}^{\star}}
\def\zr{\hat{\mathbf{z}}}
\def\mA{\mathcal{A}}
\def\mset{\mathcal{A}}
\def\Prob{\mathbb{P}}
\def\mE{\mathbb{E}}
\def\C{\mathbb{C}}
\def\mC{\mathbf{C}}
\def\R{\mathbb{C}}
\def\mR{\mathbf{R}}
\def\I{\mathbf{I}}
\def\bigo{\mathbf{O}}
\def\S{\mathbf{S}}
\def\U{\mathbf{U}}
\def\v{\mathbf{v}}
\def\OOmega{\Omega}
\def\H{\mathbf{H}}
\def\eps{\boldsymbol{\epsilon}}
\def\bxi{\boldsymbol{\xi}}
\def\PPhi{\boldsymbol{\Phi}}
\def\PPsi{\boldsymbol{\Psi}}
\def\ric{\delta_s}
\def\sset{\mathcal{D}_{s,n}}
\def\mT{\mathcal{T}}
\def\Bset{\mathcal{B}}
\def\B{\mathbf{B}}
\def\tB{\widetilde{\mathbf{B}}}
\def\tn{\tilde{n}}
\def\xnorm{\|\x\|_2}
\def\ttwo{2\rightarrow2}
\def\diag{\text{diag}}
\def\supp{\text{supp}}
\def\TTheta{\boldsymbol{\Theta}}
\def\LLambda{\boldsymbol{\Lambda}}
\def\llambda{\boldsymbol{\lambda}}
\def\vset{\mathcal{A}_{\v}}
\def\bigo{\mathcal{O}}
\def\w{\mathbf{w}}
\def\skripc{\delta_{s,k}}
\def\r{\mathbf{r}}
\begin{document}

\title{Uniform Recovery Bounds for Structured Random Matrices in Corrupted Compressed Sensing}

\author{Peng~Zhang, Lu~Gan, Cong~Ling and~Sumei~Sun
\thanks{P. Zhang was with the Department
of Electrical and Electronic Engineering, Imperial College London, London,
SW7 2AZ, UK (e-mail: p.zhang12@imperial.ac.uk). He is now with the Institute for Infocomm Research, A$^{*}$STAR, Singapore, 138632, Singapore (e-mail: zhangp@i2r.a-star.edu.sg).}
\thanks{C. Ling is with the Department
of Electrical and Electronic Engineering, Imperial College London, London,
SW7 2AZ, UK (e-mail: cling@ieee.org) .}
\thanks{L. Gan is with the College of Engineering, Design and Physical Science, Brunel University, London, UB8 3PH, UK (e-mail: lu.gan@brunel.ac.uk).}
\thanks{S. Sun is with the Institute for Infocomm Research, A$^{*}$STAR, Singapore, 138632, Singapore (e-mail: sunsm@i2r.a-star.edu.sg).}}




\maketitle

\begin{abstract}
We study the problem of recovering an $s$-sparse signal $\xs\in\C^n$ from corrupted measurements $\y=\A\xs+\zs+\w$, where $\zs\in\C^m$ is a $k$-sparse corruption vector whose nonzero entries may be arbitrarily large and $\w\in\C^m$ is a dense noise with bounded energy. The aim is to exactly and stably recover the sparse signal with tractable optimization programs. In this paper, we prove the \emph{uniform recovery guarantee} of this problem for two classes of structured sensing matrices. The first class can be expressed as the product of a unit-norm tight frame (UTF), a random diagonal matrix and a bounded column-wise orthonormal matrix (e.g., partial random circulant matrix). When the UTF is bounded (i.e. $\mu(\U)\sim1/\sqrt{m}$), we prove that with high probability, one can recover an $s$-sparse signal exactly and stably by $l_1$ minimization programs even if the measurements are corrupted by a sparse vector, provided $m=\bigo(s\log^2s\log^2n)$ and the sparsity level $k$ of the corruption is a constant fraction of the total number of measurements. The second class considers randomly sub-sampled orthonormal matrix (e.g., random Fourier matrix). We prove the uniform recovery guarantee provided that the corruption is sparse on certain sparsifying domain. Numerous simulation results are also presented to verify and complement the theoretical results.
\end{abstract}

\begin{IEEEkeywords}
Compressed sensing, corruption, dense noise, unit-norm tight frames.
\end{IEEEkeywords}

\section{Introduction}
The theory of compressed sensing has been widely studied and applied in various promising applications over the recent years \cite{EJC05robustuncertainty,EJC06nearoptimal,Donoho06compressedsensing,eldar2012compressed,foucart2013mathematical}. It provides an efficient way to recover a sparse signal from a relatively small number of measurements. Specifically, an $s$-sparse signal $\xs$ is measured through
\begin{align}
\y=\A\xs+\w,
\end{align}
where $\A\in\R^{m\times n}$ is referred to as the sensing matrix, $\y\in\R^{m}$ is the measurement vector and $\w\in\R^{m}$ represents the noise vector with the noise level $\|\w\|_2\leq\varepsilon$. It has been shown that if $\A$ satisfies the restricted isometry property (RIP) and $\varepsilon$ is small, the recovered signal $\xr$ obtained by $l_1$ norm minimization is close to the true $\xs$, i.e. $\|\xr-\xs\|\leq C\varepsilon$ with $C$ being a small numerical constant. Many types of sensing matrices have been proven to satisfy the RIP condition. For example, random Gaussian/Bernoulli matrices satisfy the RIP with high probability if $m\geq\bigo(s\log (n/s))$ \cite{Donoho06compressedsensing,EJC05robustuncertainty}, whereas structured sensing matrices consisting of either randomly subsampled orthonormal matrix \cite{rauhut2010compressive} or modulated unit-norm tight frames \cite{zhang2014modulated} have the RIP with high probability when $m$ is about $\bigo(s\log^4 n)$\footnote{Recent works \cite{bourgain2014improved}\cite{haviv2017restricted} for subsampled Fourier matrices show that the factor $\log^4 n$ can be reduced to $\log^3 n$.}.

This standard compressed sensing problem has been generalized to cope with the recovery of sparse signals when some unknown entries of the measurement vector $\y$ are severely corrupted. Mathematically, we have
\begin{align}
\y=\A\xs+\zs+\w,
\end{align}
where $\zs\in\R^{m}$ is an unknown sparse vector. To reconstruct $\xs$ from the measurement vector $\y$, the following penalized $l_1$ norm minimization has been proposed:
\begin{align}\label{eqn: penalized rec}
\min_{\x,\z}\|\x\|_1+\lambda\|\z\|_1\quad\quad \mathrm{s.t.} \quad\quad \|\y-(\A\x+\z)\|_2\leq\varepsilon.
\end{align}
In \cite{li2013compressed}, it was shown that random Gaussian matrices can provide uniform recovery guarantees to this problem \eqref{eqn: penalized rec}. In other words, a single random draw of a Gaussian matrix $\A$ is able to stably recover all $s$-sparse signals $\xs$ and all $k$-sparse corruptions $\zs$ simultaneously with high probability. On the other hand, for structured sensing matrices, the \emph{nonuniform recovery guarantees}\footnote{A nonuniform recovery result only states that a fixed pair of sparse signal and sparse corruption can be recovered with high probability using a random draw of the matrix. Sometimes, the signs of the non-zero coefficients of the sparse vector (and corruption) can be chosen at random to further simplify arguments. Uniform recovery is stronger than nonuniform recovery. (see \cite[Chapter 9.2]{foucart2013mathematical}\cite[Section 3.1]{rauhut2010compressive} for more details.)} can be proved for randomly subsampled orthonormal matrix \cite{Nguyen2013exact} and its generalized model - bounded orthonormal systems\footnote{See \cite{candes2011probabilistic} for the construction of the generalized model.} \cite{li2013compressed}. Very recently, the uniform recovery guarantee for bounded orthonormal systems is shown in \cite{adcock2017compressed}.

In this paper, we prove the uniform recovery guarantee for two different corrupted sensing models. In the first model, the measurement matrix is based on randomly modulated unit-norm frames \cite{zhang2014modulated} and the corruption is sparse on the identity basis. It is noted that the measurement matrix in the first model does not consist of a random subsampling operator, e.g., the partial random circulant matrix \cite{krahmer2014suprema}. For the second model, we consider
\begin{align}
\y = \A\xs + \H\zs + \w,
\end{align}
where $\A$ represents a randomly subsampled orthonormal matrix, and the corruption $\H\zs$ is assumed to be sparse on certain bounded domain (e.g., a discrete Fourier transform (DFT) matrix). Our results imply that many structured sensing matrices can be employed in the corrupted sensing model to ensure the exact and stable recovery of both $\xs$ and $\zs$, even when the sparsity of the corruption is up to a constant fraction of the total number of measurements. Thanks to the uniform recovery guarantee, our results can address the adversarial setting, which means that exact and stable recovery is still guaranteed even when $\xs$, $\zs$ and $\w$ are selected given knowledge of the sensing matrix $\A$. In addition, our analysis results are also applicable to demonstrate the recovery guarantee when the corrupted sensing problem is solved via nonconvex optimization.

\subsection{Potential Applications}\label{sec: potential apps}
The problem of recovering sparse signal $\xs$ and sparse corruption $\zs$ from the measurement vector $\y$ arise from many applications, where the compressed measurements may be corrupted by impulse noise.

For example, in a sensor network, each sensor node measures the same signal $\xs$ independently before sending the outcome to the center hub for analysis. In this setting, each sensor makes the measurement $\langle\a_i,\xs\rangle$, and the resultant measurement vector is $\A\xs$ by arranging each $\a_i$ as the rows of $\A$ \cite{haupt2008compressed,Nguyen2013exact}. However, in practice, some sensor readings can be anomalous from the rest. These outliers could be caused by individually malfunctioned sensors, or due to some unusual phenomena or event happening in certain areas of the network \cite{franke2009orden}\cite{zhang2010outlier}. This anomaly effect can be modeled by a sparse vector $\zs$. Mathematically, we have $\y=\A\xs+\zs+\w$, where $\zs$ represents the outlier regions and $\w$ stands for possible small noise in the data transmission. Our results make it possible to recover both the underlying signal and detect the outlier regions simultaneously, which could be very useful for network monitoring.

Another application of sparse signal recovery from sparsely corrupted measurements is  error correction in joint source-channel coding. In \cite{charbi2010compressive,li2010systematic,Nguyen2013exact}, compressed sensing has been exploited as a joint source-channel coding strategy for its efficient encoding and robust error correcting performance. For a signal $\f$ that is sparse in the domain $\PPsi$, i.e., $\f=\PPsi\xs$, it can be encoded by a linear projection $\y=\PPhi\f=\A\xs$ with $\A=\PPhi\PPsi$. Existing works have investigated the situations where the encoded signal $\y$ is sent through either an erasure channel \cite{charbi2010compressive} or a gross error channel \cite{li2010systematic,Nguyen2013exact}. Our results can not only be applied in these scenarios, but also provide a new design on the encoding matrix with uniform recovery guarantee.

In some scenarios, the measurement noise may be sparse or compressible in some sparsifying basis. One example is the recovery of video or audio signal that are corrupted by narrow-band interference (NBI) due to improper designed equipment \cite{laska2009exact,studer2012recovery}. Electric hum as a typical impairment is sparse in the Fourier basis. Another example is the application of compressed sensing to reduce the number of samples in convolution systems with deterministic sequences (e.g., m-sequence, Golay sequence). Such convolution systems are widely used in communications, ultrasound and radar \cite{popovic1991synthesis, davis1999peak}. In practice, the measurements may be affected by frequency domain interference or multi-tone jamming \cite{levitt1985fh}. For instance, in CS-based OFDM channel estimation \cite{berger2010sparse,haupt2010toeplitz,meng2012compressive,li2013convolutional}, suppose $\x$ is the channel response and that the pilot sequence $\g$ is constructed from Golay sequences, the time-domain received signal can be represented as \cite{li2013convolutional}, $\y=\sqrt{\frac{n}{m}}\mR_{\Omega'}\F^*\diag(\g)\F\x + \w$, where $\mR_{\Omega'}$ is a random subsampling operator and $\F$ denotes the DFT matrix. The recovery performance can be guaranteed by noticing that the sensing model is a subsampled version of the orthonormal matrix $\F^*\diag(\g)\F$. However, in OFDM-based powerline communications, the NBI due to intended or unintended narrow-band signals can severely contaminate the transmitted OFDM signal. The time-domain NBI vector is sparse in the Fourier basis \cite{umehara2006performance,gomaa2011sparsity}. Our results cover these settings, and therefore, provide a CS-based method to jointly estimate the signal of interest and the NBI.

\subsection{Notations and Organization of the paper}
For an $n$-element vector $\a$, we denote by $a_i$, ($i\in[n]=\{0,...,n-1\}$), the $i$-th element of this vector. We represent a sequence of vectors by $\a_0,...,\a_{n-1}$ and a column vector with $q$ ones by $\one_q$. The sparsity of a vector can be measured by its best $s$-term approximation error,
\begin{align*}
\sigma_s(\a)_p = \inf_{\|\tilde{\a}\|_0\leq s} \|\a-\tilde{\a}\|_p,
\end{align*}
where $\|\cdot\|_p$ is the standard $l_p$ norm on vectors. For a matrix $\A$, $A_{jk}$ denotes the element on its $j$-th row and $k$-th column. The vector obtained by taking the $j$-th row ($k$-th column) of $\A$ is represented by $\A_{(j,:)}$ ($\A_{(:,k)}$). We denote by $\A_0,...,\A_{n-1}$ a sequence of matrices. $\A^{-1}$ and $\A^*$ represent the inverse and the conjugate transpose of $\A$. The Frobenius norm and the operator norm of matrix $\A$ are denoted by $\|\A\|_F=\sqrt{\text{tr}(\A^*\A)}$ and $\|\A\|_{2\rightarrow2}=\sup_{\xnorm=1}\|\A\x\|_2$ respectively. We write $A\lesssim B$ if there is an absolute constant $c$ such that $A\leq cB$. We denote $A\sim B$ if $c_1A\leq B\leq c_2A$ for absolute constants $c_1$ and $c_2$.

The coherence $\mu(\A)$ of an $\tn\times n$ matrix $\A$ describes the maximum magnitude of the elements of $\A$, i.e., $\mu(\A)=\max_{\substack{1\leq j\leq\tn \\ 1\leq k\leq n}} |\A_{jk}|$. For a unitary matrix $\PPsi\in\R^{n\times n}$, we have $\frac{1}{\sqrt{n}}\leq \mu(\PPsi)\leq 1$.

The rest of the paper is organized as follows. We start by reviewing some key notions and results in compressed sensing in Section \ref{sec: preliminaries}. In Section \ref{sec: main}, we prove the uniform recovery guarantee for two classes of structured random matrices. In Section \ref{sec: simulations}, we conduct a series of simulations to reinforce our theoretical results. Conclusion is given in Section \ref{sec: conclusion}. We defer most of the proofs to the Appendices.
\section{Preliminaries}\label{sec: preliminaries}
\subsection{RIP and structured sensing matrices}
The restricted isometry property (RIP) is a sufficient condition that guarantees uniform and stable recovery of all $s$-sparse vectors via nonlinear optimization (e.g. $l_1$-minimization). For a matrix $\A\in\R^{m\times n}$ and $s<n$, the restricted isometry constant $\ric$ is defined as the smallest number such that
\begin{align*}
(1-\ric)\|\x\|_2^2\leq\|\A\x\|_2^2\leq(1+\ric)\|\x\|_2^2,
\end{align*}
holds for all $s$-sparse vectors $\x$. Alternatively, the restricted isometry constant of $\A$ can be written as
\begin{align}
\ric=\sup_{\x\in\sset}\left|\|\A\x\|_2^2-\|\x\|_2^2\right|, \label{eqn: ric dfn}
\end{align}
where $\sset=\{\x\in\R^n: \|\x\|_2\leq 1, \|\x\|_0\leq s\}$.

Among the many structured sensing matrices that satisfy the RIP, two classes have been found to be applicable in various scenarios. One is the randomly subsampled orthonormal systems \cite{rauhut2010compressive}, which encompass structured sensing matrices like partial random Fourier \cite{EJC06nearoptimal}, convolutional CS \cite{romberg2009compressive,li2013convolutional} and spread spectrum \cite{puy2012universal}. The other is the UDB framework which consists of a unit-norm tight frame (UTF), a random diagonal matrix and a bounded column-wise orthonormal matrix \cite{zhang2014modulated}. Popular sensing matrices under this framework include partial random circulant matrices \cite{krahmer2014suprema}, random demodulation \cite{tropp2010beyond}, random probing \cite{romberg2010sparse} and compressive multiplexing \cite{slavinsky2011compressive}.

\subsection{Recovery Condition}
We review the definition of generalized RIP, which is useful to establish robustness and stability of the optimization algorithm.

\begin{dfn}\cite[Definition 2.1]{li2013compressed}
For any matrix $\TTheta\in\R^{r\times(n+m)}$, it has the $(s,k)$-RIP with constant $\delta_{s,k}$ if $\delta_{s,k}$ is the smallest value of $\delta$ such that
\begin{align}
(1-\delta)(\|\x\|_2^2+\|\z\|_2^2)\leq\left\|\TTheta\begin{bmatrix}\x\\ \z\end{bmatrix}\right\|_2^2\leq(1+\delta)(\|\x\|_2^2+\|\z\|_2^2)
\end{align}
holds for any $\x\in\R^{n}$ with $\|\x\|_0\leq s$ and any $\z\in\R^{m}$ with $\|\z\|_{0}\leq k$.
\end{dfn}

Here, the generalized RIP is termed as the $(s,k)$-RIP for convenience. We note that the $(s,k)$-RIP is more stringent than the conventional RIP. In other words, the fact that a sensing matrix $\A$ satisfies the RIP does not mean that the associated matrix $\TTheta=[\A,\ \I]$ would satisfy the $(s,k)$-RIP. The recovery performance of the penalized optimization \eqref{eqn: penalized rec} can be guaranteed by the following result.

\begin{theorem}\label{thm: recovery condition}\cite[Theorem 3.7]{adcock2017compressed}
Suppose $\y=\A\xs+\zs+\w$ and $\TTheta=[\A,\ \I]\in\R^{m\times(n+m)}$ has the $(2s,2k)$-RIP constant $\delta_{2s,2k}$ satisfying
\begin{align*}
\delta_{2s,2k}< \frac{1}{\sqrt{1+\left(\frac{1}{2\sqrt{2}}+\sqrt{\eta}\right)^2}}
\end{align*}
with $\eta=\frac{s+\lambda^2 k}{\min \{s, \lambda^2 k\}}$. Then for $\xs\in\R^n$, $\zs\in\R^m$, and $\w\in\R^m$ with $\|\w\|_2\leq\varepsilon$, the solution $(\xr,\zr)$ to the penalized optimization problem \eqref{eqn: penalized rec} satisfies
\begin{align*}
\|\xr-\xs\|_1+\|\zr-\zs\|_1\ & \leq c_1(\sigma_s(\x)_1+\lambda\sigma_k(\z))\\
&\quad + c_2\sqrt{s+\lambda^2k}\varepsilon \\
\|\xr-\xs\|_2+\|\zr-\zs\|_2\ & \leq c_3 \left(1+\eta^{1/4}\right)\left(\frac{\sigma_s(\x)_1}{\sqrt{s}}+\frac{\sigma_k(\z)_1}{\sqrt{k}}\right) \\
&\quad + c_4\left(1+\eta^{1/4}\right)\varepsilon,
\end{align*}
where the constants $c_1$, $c_2$, $c_3$, $c_4$ depend on $\delta_{2s,2k}$ only.
\end{theorem}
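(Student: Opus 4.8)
The plan is to run a standard restricted-isometry recovery argument, but carried out on the stacked variable $\p=\begin{bmatrix}\x\\ \z\end{bmatrix}\in\R^{n+m}$ equipped with the weighted norm $\|\x\|_1+\lambda\|\z\|_1$, so that the $(2s,2k)$-RIP of $\TTheta=[\A,\ \I]$ plays the role of the ordinary RIP. First I would record the two consequences of optimality. Since $\|\y-(\A\xs+\zs)\|_2=\|\w\|_2\leq\varepsilon$, the true pair $(\xs,\zs)$ is feasible, so the minimizer $(\xr,\zr)$ obeys $\|\xr\|_1+\lambda\|\zr\|_1\leq\|\xs\|_1+\lambda\|\zs\|_1$. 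Writing $\h_x=\xr-\xs$, $\h_z=\zr-\zs$ and letting $S$ (resp.\ $T$) index the $s$ (resp.\ $k$) largest-magnitude entries of $\xs$ (resp.\ $\zs$), a routine triangle-inequality splitting of the objective into on-support and off-support parts yields the cone bound
\begin{align*}
\|\h_{x,S^c}\|_1+\lambda\|\h_{z,T^c}\|_1\leq\|\h_{x,S}\|_1+\lambda\|\h_{z,T}\|_1+2\sigma_s(\xs)_1+2\lambda\sigma_k(\zs)_1 .
\end{align*}
Feasibility of both pairs supplies the second ingredient, $\|\A\h_x+\h_z\|_2=\|\TTheta\h\|_2\leq2\varepsilon$, via the triangle inequality on the two residuals.

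The core of the proof is to convert these two facts into an $\ell_2$ bound on $\h$ using the $(2s,2k)$-RIP. The obstacle specific to this problem is that the error is not sparse and, more importantly, that the two blocks carry \emph{different} sparsity levels $s$ and $k$ and are weighted unequally by $\lambda$. I would handle this with the Cai--Zhang ``sparse representation of a polytope'' technique: writing $\Lambda$ for the combined support (the images of $S$ and $T$ in the $n+m$ coordinates), one represents the normalized tail $\h_{\Lambda^c}$ as a convex combination $\sum_i\mu_i\u_i$ of vectors $\u_i$ whose $\x$-block is $s$-sparse and $\z$-block is $k$-sparse, with $\ell_\infty$-norms controlled by the cone bound above. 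The key point is that each $\h_\Lambda+\u_i$ is then admissible for the $(2s,2k)$-RIP, so $\TTheta$ is near-isometric on it. Expanding $\langle\TTheta\h_\Lambda,\TTheta\h\rangle$ and inserting the convex decomposition, the RIP yields a lower bound $(1-\delta_{2s,2k})\|\h_\Lambda\|_2^2$ played off against the cross terms, which the near-orthogonality estimate $|\langle\TTheta\u,\TTheta\v\rangle|\leq\delta_{2s,2k}\|\u\|_2\|\v\|_2$ (valid for disjointly supported admissible $\u,\v$) controls in turn.

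The delicate computation, and the step I expect to be the main obstacle, is the balancing of the two blocks, which is exactly where the parameter $\eta=\frac{s+\lambda^2k}{\min\{s,\lambda^2k\}}$ is born. Converting the $\ell_1$ cone bound into an $\ell_2$ estimate costs a factor $\sqrt{s}$ on the $\x$-block and $\sqrt{k}$ on the $\z$-block, and the weight $\lambda$ rescales the latter; to obtain a single clean threshold one must distribute the tail $\ell_1$-mass optimally between the decomposition chunks and track how the mismatched normalizations combine. Carrying this through produces the stated admissibility condition $\delta_{2s,2k}<\bigl(1+(\tfrac{1}{2\sqrt2}+\sqrt{\eta})^2\bigr)^{-1/2}$, under which the resulting quadratic inequality in $\|\h_\Lambda\|_2$ can be solved to give $\|\h_\Lambda\|_2\lesssim\varepsilon+(\sigma_s(\xs)_1+\lambda\sigma_k(\zs)_1)/\sqrt{\min\{s,\lambda^2k\}}$. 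Finally I would close the argument in the customary way: $\|\h\|_2\leq\|\h_\Lambda\|_2+\|\h_{\Lambda^c}\|_2$ with the tail bounded through the cone condition gives the $\ell_2$ conclusion (the $1+\eta^{1/4}$ prefactor emerging from the block normalizations), and feeding the $\ell_2$ bound back into the cone inequality recovers the $\ell_1$ conclusion.
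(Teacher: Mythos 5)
You should first be aware that the paper contains no proof of this statement to compare against: Theorem~\ref{thm: recovery condition} is imported verbatim, with citation, from \cite[Theorem 3.7]{adcock2017compressed}, and is used in Section~\ref{sec: main} purely as a black box that converts the $(s,k)$-RIP bounds of Theorems~\ref{thm: the skRIP of udb} and~\ref{thm: skRIP of rdn unitary mtx} into recovery guarantees. So your proposal has to be judged against the standard proofs of results of this type, not against anything in the paper itself.

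On that footing, your skeleton is the right one and its first steps are correct: feasibility of $(\xs,\zs)$ together with minimality of $(\xr,\zr)$ gives exactly the weighted cone bound you write, the triangle inequality on the two residuals gives the tube bound $\|\TTheta\h\|_2\leq 2\varepsilon$ for the stacked error, and your reading of $\eta$ as the price of Cauchy--Schwarz balancing of two blocks with mismatched sparsities and weights is exactly right, since $\sqrt{\eta}=\sqrt{s+\lambda^2k}\,\max\bigl\{1/\sqrt{s},\,1/(\lambda\sqrt{k})\bigr\}$. The genuine gap is at the step you yourself flag as the main obstacle: you assert, but do not carry out, the computation showing that the Cai--Zhang polytope decomposition yields precisely the condition $\delta_{2s,2k}<\bigl(1+(\tfrac{1}{2\sqrt{2}}+\sqrt{\eta})^2\bigr)^{-1/2}$, and that assertion is doubtful on its face. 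Polytope-representation arguments characteristically produce thresholds of the form $\delta_{ts}<\sqrt{(t-1)/t}$, whereas the threshold here has the form $\delta<1/\sqrt{1+c^2}$ with $c=\tfrac{1}{2\sqrt{2}}+\sqrt{\eta}$; this is the fingerprint of the other standard route, namely RIP $\Rightarrow$ (two-block, weighted) robust null space property via the square-root lifting/shifting inequality, with tail blocks of size $2s$ and $2k$ matching the $(2s,2k)$-RIP order --- note $\tfrac{\sqrt{2s}}{4}\cdot\tfrac{1}{\sqrt{s}}=\tfrac{1}{2\sqrt{2}}$ is exactly the lifting constant for blocks of size $2s$, in direct analogy with \cite[Theorem 6.13]{foucart2013mathematical} where blocks of size $s$ give $\delta_{2s}<1/\sqrt{1+(1+\tfrac14)^2}=4/\sqrt{41}$ --- followed by the standard null-space-property-to-recovery bounds. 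As written, your argument therefore either proves a different theorem (recovery under a polytope-type condition, which you would still need to show is implied by the stated condition on $\delta_{2s,2k}$) or does not close. To repair it, either switch the core of the argument to the shifting-inequality/robust null space property route, or execute the two-block polytope computation in full and verify that the threshold it produces dominates the one stated here.
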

We note that similar theorem has been proven in \cite{li2013compressed} when both the signal and corruption are vectors with exact sparsity. The above result not only relaxes the requirement on the $(2s,2k)$-RIP constant, but also guarantees stable recovery of inexactly sparse signals and corruptions. Therefore, for either sparse or compressible signals and corruptions, the key to establish the recovery guarantee for a sensing matrix is to prove the $(s,k)$-RIP.

\section{Main Results}\label{sec: main}
In this section, we prove the $(s,k)$-RIP for two classes of structured sensing matrices. This result can then be combined with Theorem \ref{thm: recovery condition} to prove the recovery guarantee. In addition, the extension to the recovery via nonconvex optimization is presented. Last but not least, we compare the main theorems to existing literature where relevant.

\subsection{Randomly modulated unit-norm tight frames}
We prove the uniform recovery guarantees for the class of structured sensing matrices that can be written as $\A=\U\D\tB$, where $\U\in\R^{m\times \tn}$ is a UTF with $\mu(\U)\sim 1/\sqrt{m}$, $\D=\diag(\bxi)$ is a diagonal matrix with $\bxi$ being a length-$\tn$ random vector with independent, zero-mean, unit-variance, and $L$-subgaussian entries, and $\tB\in\R^{\tn\times n}$, $\tn\geq n$, represents a column-wise orthonormal matrix, i.e. $\tB^*\tB=\I$.

The following result presents a bound on the required number of measurements $m$ such that the corresponding matrix $\TTheta$ has the $(s,k)$-RIP constant satisfying $\delta_{s,k}\leq\delta$ for any $\delta\in (0,1)$.
\begin{theorem}\label{thm: the skRIP of udb}
Suppose $\y=\A\xs+\zs+\w$ with $\TTheta=[\A,\ \I]\in\R^{m\times(n+m)}$, $\A=\U\D\tB$ and $\mu(\U)\sim 1/\sqrt{m}$. If, for $\delta\in (0,1)$,
\begin{align*}
m&\geq c_5\delta^{-2}s\tn\mu^2(\tB)\log^2s\log^2\tn,\\
m&\geq c_6\delta^{-2}k\log^2 k\log^2\tn,
\end{align*}
where $c_5$ and $c_6$ are some absolute constants, then with probability at least $1-2\tn^{-\log^2 s\log \tn}$, the $(s,k)$-RIP constant of $\TTheta$ satisfies $\delta_{s,k}\leq\delta$.
\end{theorem}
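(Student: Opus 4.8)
The plan is to establish the $(s,k)$-RIP for $\TTheta = [\A,\ \I]$ by controlling the quantity

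\begin{align*}
\delta_{s,k} = \sup_{(\x,\z)\in\mathcal{S}} \left| \left\| \TTheta \begin{bmatrix} \x \\ \z \end{bmatrix} \right\|_2^2 - \left( \|\x\|_2^2 + \|\z\|_2^2 \right) \right|,
\end{align*}

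where $\mathcal{S}$ denotes the set of pairs with $\|\x\|_2^2 + \|\z\|_2^2 \leq 1$, $\|\x\|_0 \leq s$, and $\|\z\|_0 \leq k$. Expanding the norm, $\|\A\x + \z\|_2^2 = \|\A\x\|_2^2 + 2\operatorname{Re}\langle \A\x, \z\rangle + \|\z\|_2^2$, so the deviation decomposes into the ordinary RIP term $\bigl| \|\A\x\|_2^2 - \|\x\|_2^2 \bigr|$ for $\A = \U\D\tB$ plus the cross term $2|\langle \A\x, \z\rangle|$. First I would invoke the known RIP result for the UDB framework of \cite{zhang2014modulated} to bound the diagonal term: under the first measurement condition $m \gtrsim \delta^{-2} s\tn\mu^2(\tB)\log^2 s\log^2\tn$, the restricted isometry constant of $\A$ is at most $\delta/3$ with the claimed failure probability. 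The second measurement bound $m \gtrsim \delta^{-2} k \log^2 k \log^2 \tn$ is of a different flavour and must control the corruption block together with the cross term.

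The main work, and the hardest part, is bounding the cross term uniformly over all admissible $(\x,\z)$. I would recast $\langle \A\x, \z\rangle = \langle \U\D\tB\x, \z\rangle$ as a chaos-type quantity in the subgaussian vector $\bxi$ (recall $\D = \diag(\bxi)$). Writing the supremum over the sparse sets as a supremum of a random process indexed by $(\x,\z)$, the natural tool is the generic chaining / Dudley-type bound for suprema of chaos processes, precisely the machinery of \cite{krahmer2014suprema} that already underlies the UDB RIP. The strategy is to express $\langle \U\D\tB\x,\z\rangle = \bxi^* \M_{\x,\z}\, \bxi'$ or, after conditioning, as a linear form $\bxi^* \v_{\x,\z}$ where $\v_{\x,\z}$ depends on $\x$ through $\tB$ and on $\z$ through $\U$; then the relevant complexity is measured by the radius of the index set in operator norm together with its $\gamma_2$-functional under the $\|\cdot\|_{2\to2}$ and Frobenius metrics. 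The crucial observation is that the coherence condition $\mu(\U)\sim 1/\sqrt{m}$ and the column-orthonormality $\tB^*\tB = \I$ keep these metric entropies small, and the sparsity constraints $\|\x\|_0\le s$, $\|\z\|_0\le k$ localize the index set so that the chaining integral yields the $\sqrt{k}\,\text{polylog}$ and $\sqrt{s}\,\text{polylog}$ scalings matching the two stated bounds on $m$.

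Once the deviation of the chaos process is controlled in expectation via the chaining bound, I would upgrade to a high-probability tail estimate using the concentration inequality for suprema of chaos (the tail bound accompanying the generic chaining result), which produces a deviation term of order $\sqrt{\log(\cdots)}$ and contributes the $\tn^{-\log^2 s\log\tn}$ failure probability through the choice of deviation parameter. The final step is bookkeeping: choose the constants $c_5,c_6$ so that the RIP term, the cross term, and the corruption-block term are each at most $\delta/3$, then combine by the triangle inequality and a union bound over the two failure events to obtain $\delta_{s,k}\le\delta$ with the stated probability. I expect the genuinely delicate estimate to be the $\gamma_2$-functional bound for the cross term: one must verify that the interaction between the tight-frame structure of $\U$ on the measurement side and the sparse support of $\z$ does not inflate the complexity beyond the $k\log^2 k\log^2\tn$ budget, since a naive bound over all $\binom{m}{k}$ supports would be too lossy and the tight-frame flatness of $\U$ must be exploited to tame it.
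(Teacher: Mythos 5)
Your proposal is correct and follows essentially the same route as the paper: the same split of $\delta_{s,k}$ into the standard RIP term for $\A=\U\D\tB$ (bounded by the UDB result of \cite{zhang2014modulated}) plus the cross term, the same recasting of $\langle\U\D\tB\x,\z\rangle$ as a linear subgaussian form $\langle\v,\bxi\rangle$ with $\v=(\z^*\U\diag(\tB\x))^*$, and the same control of its supremum via the suprema-of-chaos machinery of \cite{krahmer2014suprema} (Frobenius/operator-norm radius plus a $\gamma_2$/Dudley entropy estimate exploiting $\mu(\U)\sim 1/\sqrt{m}$ and $\mu(\tB)$), upgraded to a tail bound and combined by a union bound. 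The only slip is immaterial: there is no separate ``corruption-block term'' to budget at $\delta/3$, since the $\|\z\|_2^2$ contributions cancel exactly in the expansion of $\|\A\x+\z\|_2^2$, leaving precisely the two terms you otherwise handle.
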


\begin{proof}
The $(s,k)$-RIP constant $\skripc$ can be equivalently expressed as
\begin{align}\label{eqn: ripc equivalent expression}
\skripc=\sup_{(\x,\z)\in\mT}\left|\left\|\TTheta\begin{bmatrix}\x\\ \z\end{bmatrix}\right\|_2^2-\|\x\|_2^2-\|\z\|_2^2\right|,
\end{align}
where $\mT:=\{(\x,\z):\|\x\|_2^2+\|\z\|_2^2=1,\|\x\|_0\leq s,\|\z\|_{0}\leq k, \x\in\R^{n}, \z\in\R^{m}\}$. With $\TTheta=[\A,\ \I]$, the RIP-constant can be further reduced to
\begin{align}\label{eqn: ripc decomposition}
\skripc&=\sup_{(\x,\z)\in\mT}\left|\|\A\x\|_2^2+\|\z\|_2^2+2\langle\A\x,\z\rangle-\|\x\|_2^2-\|\z\|_2^2\right|\nonumber\\
      &\leq \underbrace{\sup_{(\x,\z)\in\mT}\left|\|\A\x\|_2^2-\|\x\|_2^2\right|}_{\delta_1}+\underbrace{2\sup_{(\x,\z)\in\mT}|\langle\A\x,\z\rangle|}_{\delta_2}
\end{align}
Our aim is to derive bounds on the number of measurements $m$ such that for any $\delta\in(0,1)$ the RIP-constant $\skripc$ is upper bounded by $\delta$. We have
\begin{align}
\delta_1&\leq \sup_{\x\in\sset}\left|\|\A\x\|_2^2-\|\x\|_2^2\right|
\end{align}
with $\sup_{\x\in\sset}\left|\|\A\x\|_2^2-\|\x\|_2^2\right|$ being the restricted isometry constant in the standard RIP definition \eqref{eqn: ric dfn}. Then, by \cite[Theorem III.2]{zhang2014modulated}, we reach the following result.

Suppose, for any $\delta\in (0,1)$, 
\begin{align*}
m\geq 4c_1\delta^{-2}s\tn\mu^2(\tB)(\log^2s\log^2\tn),
\end{align*}
then $\delta_1\leq\delta/2$ holds with probability at least $1-\tn^{-(\log \tn)(\log s)^2}$.

Therefore, proof of the $(s,k)$-RIP is reduced to bounding the inner product term $\delta_2$.
\begin{align}
\delta_2&=2\sup_{(\x,\z)\in\mT}|\langle\U\D\tB\x,\z\rangle|=2\sup_{(\x,\z)\in\mT}|\z^{*}\U\D\tB\x|\nonumber\\
        &=2\sup_{(\x,\z)\in\mT}|\z^{*}\U\diag(\tB\x)\bxi|=2\sup_{\v\in\vset}|\langle\v,\bxi\rangle|, \label{eqn: detla 2}
\end{align}
where $\v=(\z^{*}\U\diag(\tB\x))^*$, and 
\begin{align}\label{eqn: set of vectors}
\vset:=\{\v:\|\x\|_2^2+\|\z\|_2^2=1,\|\x\|_0\leq s,\|\z\|_{0}\leq k\}.
\end{align}
The following lemma is proved in Appendix \ref{appdix: proof of useful lemma}.
\begin{lemma}\label{lem: useful lemma bound the inner product}
Suppose $\bxi$ is a length-$\tn$ random vector with independent, zero-mean, unit-variance, and $L$-subgaussian entries. For any $\delta\in(0,1)$, if
\begin{align*}
m&\geq c_5\delta^{-2}s\tn\mu^2(\tB)\log^2 s\log^2 \tn\\
m&\geq c_6\delta^{-2}k\log^2 k\log^2 \tn,
\end{align*}
then $\sup_{\v\in\vset} |\langle\v,\bxi\rangle|\leq \delta/2$ holds with probability exceeding $1-\exp(-\log^2 s\log^2 \tn)$, where $c_5$ and $c_6$ are some constants depending only on $L$.
\end{lemma}

Combining \eqref{eqn: detla 2} with Lemma \ref{lem: useful lemma bound the inner product}, we have, for any $\tau>0$, $\delta_2\leq c\delta$ holds with probability exceeding $1-\exp(-\log^2 s\log^2 \tn)$ for some constant $c$.

Finally, Theorem \ref{thm: the skRIP of udb} can be obtained by combining the above results. Suppose, for any $\delta\in (0,1)$, $m\geq c_5\delta^{-2}s\tn\mu^2(\tB)(\log^2s\log^2\tn)$, $m\geq c_6\delta^{-2}k\log^2 m\log^2 \tn$ and $\mu(U)\sim1/\sqrt{m}$, then we have $\skripc\leq \delta_1+\delta_2\leq \delta$ with probability exceeding
\begin{align*}
&1-\tn^{-(\log \tn\log^2 s)}-\exp(-c\log^2 s\log^2 \tn)\\
&=1-\tn^{-(\log \tn\log^2 s)}-\tn^{-c\log^2 s\log \tn}\\
&=1-2\tn^{-\log^2 s\log \tn}
\end{align*}
\end{proof}
The uniform recovery guarantee can be obtained by combining Theorem \ref{thm: recovery condition} and \ref{thm: the skRIP of udb}.

A few remarks are in order. First, when $\tB$ is a bounded column-wise orthonormal matrix, i.e., $\mu(\tB)\sim 1/\sqrt{\tn}$, the bound on the sparsity of $\xs$ can be relaxed to $\|\xs\|_0\leq Cm/(\log^2\tn\log^2 m)$. The sparsity $\|\zs\|_{0}$ is always a constant fraction of the total number of measurements $m$ regardless the magnitude of the coherence $\mu(\tB)$. When $\|\w\|_2=0$, Theorem \ref{thm: the skRIP of udb} implies that a sparse signal can be exactly recovered by tractable $l_1$ minimization even if some parts of the measurements are arbitrarily corrupted.

Second, the proposed class of structured sensing matrices is equivalent to the UDB framework \cite{zhang2014modulated} but with an additional requirement of $\mu(\U)\sim 1/\sqrt{m}$. The UDB framework has been proved to support uniform recovery guarantees for conventional CS problem, while with the extra condition it is now shown to provide uniform recovery guarantees for the CS with sparse corruptions problem. Theorem \ref{thm: the skRIP of udb} holds for many existing and new structured sensing matrices as long as they can be decomposed into $\A=\U\D\tB$.

One application of the UDB framework is to simplify the mask design in double random phase encoding (DRPE) for optical image encryption. Consider an image $\f$ that is sparse in the domain $\PPsi$, i.e., $\f=\PPsi\xs$, DRPE is based on random masks placed in the input and Fourier planes of the optical system \cite{rivenson2010single,li2015compressiveoptical}
. Mathematically, the measurements can be written as $\y=\sqrt{\frac{n}{m}}\mR_{\OOmega}\F^*\LLambda_1\F\LLambda_2\f+\w$, where $\mR_{\OOmega}: \C^n \rightarrow\C^m$ represents an arbitrary/deterministic subsampling operator with $\OOmega$ being the set of selected row indices, $\LLambda_1$ and $\LLambda_2$ are random diagonal matrices.
By the UDB framework, the random diagonal matrix $\LLambda_2$ can be replaced by a deterministic diagonal matrix constructed from a Golay sequence $\g$. The reason is that the measurement model $\sqrt{\frac{n}{m}}\mR_{\OOmega}\F^*\LLambda_1\F\diag(\g)\PPsi\xs$ can be decomposed into a UTF $\sqrt{\frac{n}{m}}\mR_{\OOmega}\F^*$, a random diagonal matrix $\LLambda_1$, and a orthonormal matrix $\F\diag(\g)\PPsi$ whose coherence is proven to be bounded for many orthonormal transforms $\PPsi$, e.g., DCT, Haar wavelet \cite[Lemma IV.2]{zhang2014modulated}. When the measurements are corrupted by impulse noise due to detector plane impairment, our theorem above provides a recovery guarantee on the image.

Furthermore, the UDB framework emcompasses some popular structured sensing matrices, e.g., partial random circulant matrices \cite{krahmer2014suprema} and random probing \cite{romberg2010sparse}. To elaborate, consider the partial random circulant matrices
\begin{align*}
\A=\frac{1}{\sqrt{m}}\mR_{\OOmega}\mC_{\eps}
\end{align*}
where $\mC_{\eps}$ denotes the circulant matrix generated from $\eps$. Suppose $\eps=\F^*\bxi$, where $\F$ is a normalized DFT matrix and $\bxi$ is a length-$n$ random vector with independent, zero-mean, unit-variance, and sub-Gaussian entries. Let $\D=\diag(\bxi)$, we have $\A=\sqrt{\frac{n}{m}}\mR_{\OOmega}\F^*\D\F$. It can be observed that $\U=\sqrt{\frac{n}{m}}\mR_{\OOmega}\F^*$ is a UTF and $\B=\F$ is a unitary matrix. Hence, Theorem \ref{thm: the skRIP of udb} implies that any sparse signal $\x$ and sparse corruption $\z$ can be faithfully recovered from the measurement model $\y=\frac{1}{\sqrt{m}}\mR_{\OOmega}\mC_{\eps}\xs+\zs+\w$ by the penalized recovery algorithm. The sparse recovery from partial random circulant measurements can be applied in many common deconvolution tasks, such as radar \cite{herman2009high} and coded aperture imaging \cite{marcia2008fast}. In practice, where the measurements can be corrupted by impulsive noise due to bit errors in transmission, faulty memory locations, and buffer overflow \cite{pham2012improved}, our theorem guarantees the recovery of both the signal of interest and the corruption.

In some situations, the proposed framework can still provide reliable recovery guarantee even if the corruption is sparse in some basis. Suppose the corruption is sparse under some fixed and known orthonormal transformation $\H$, i.e. $\H^{*}\H=\I$. We consider the measurement model
\begin{align}
\y=\A\xs+\H\zs+\w.
\end{align}
It is clear that this setting can be reduced to
\begin{align}
\H^{*}\y=\H^{*}\A\xs+\zs+\H^{*}\w.
\end{align}
Notice that $\H^{*}\A=\H^{*}\U\D\tB:=\widehat{\U}\D\tB$, where $\widehat{\U}=\H^{*}\U$ is still a UTF due to the orthogonality of $\H$. Therefore, if $\mu(\widehat{\U})\sim 1/\sqrt{m}$, Theorem \ref{thm: the skRIP of udb} still holds in this measurement model.

\subsection{Randomly sub-sampled orthonormal system}
Next, we consider the corrupted sensing measurement model for randomly sub-sampled orthonormal system. We prove the uniform recovery guarantee for such matrices provided that the corruption is sparse on certain sparsifying domain. Suppose $\llambda\in\mathbb{R}^n$ is a random Bernoulli vector with i.i.d. entries such that $\Prob(\lambda_i=1)=\frac{m}{n}$ $\forall i\in[n]$ and $\Omega'=\{i: \lambda_i=1\}$ with $|\Omega'|=M$, the random sampling operator $\mR_{\Omega'}\in\mathbb{R}^{M\times n}$ is a collection of the $i$-th row of an $n$-dimensional identity matrix for all $i\in\Omega'$. Here, $M$ is random with mean value $m$. The observation model is
\begin{align}\label{eqn: subfft model}
\y=\A\xs+\H\zs+\w,
\end{align}
where $\A=\sqrt{\frac{n}{M}}\mR_{\Omega'}\G$, $\G\in\C^{n\times n}$ is an orthonormal basis and $\H\in\C^{M\times M}$ is a unitary matrix with $\mu(\H)\sim 1/\sqrt{M}$.

From our analysis in previous subsection, the uniform recovery performance can be guaranteed as long as the associated matrix $\TTheta$ satisfies the $(s,k)$-RIP. Since the matrix $\A$ satisfies the standard RIP, the problem of proving the $(s,k)$-RIP is again reduced to bounding the inner product term $\sup_{(\x,\z)\in\mT}|\langle\A\x,\H\z\rangle|$. Detail proof of the following result is given in Appendix \ref{app: proof of skRIP of rdn unitary mtx}.

\begin{theorem}\label{thm: skRIP of rdn unitary mtx}
Suppose $\y=\A\xs+\H\zs+\w$ with $\TTheta=[\A, \ \H]\in\R^{M\times (n+M)}$, $\A=\sqrt{\frac{n}{M}}\mR_{\Omega'}\G$ and $\mu(\H)\sim 1/\sqrt{M}$. If, for $\delta\in (0,1)$,
\begin{align*}
m&\geq \max (c_7\delta^{-2}s n\mu^2(\G)\log^2s\log^2 n, c_8\delta^2s\log^4 n, 2c_9\log n),\\
m&\geq c_{10}\delta^{-2}k n\mu^2(\G)\log^2 k\log^2 n,\\
m&\leq c_{11}\delta^2 n,
\end{align*}
where $\{c_i\}_{i=7,...,11}$ are constants, then with probability at least $1-2\tn^{-\log^2 s\log n}-n^{-c_9}$, the $(s,k)$-RIP constant of $\TTheta$ satisfies $\delta_{s,k}\leq\delta$.
\end{theorem}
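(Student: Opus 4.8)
The plan is to follow the same decomposition strategy used in the proof of Theorem~\ref{thm: the skRIP of udb}. Writing $\TTheta=[\A,\ \H]$ and using that $\H$ is unitary, so that $\|\H\z\|_2=\|\z\|_2$, I would expand
\begin{align*}
\left\|\A\x+\H\z\right\|_2^2-\|\x\|_2^2-\|\z\|_2^2=\left(\|\A\x\|_2^2-\|\x\|_2^2\right)+2\langle\A\x,\H\z\rangle,
\end{align*}
so that $\skripc\leq\delta_1+\delta_2$ with $\delta_1=\sup_{(\x,\z)\in\mT}\left|\|\A\x\|_2^2-\|\x\|_2^2\right|$ and $\delta_2=2\sup_{(\x,\z)\in\mT}|\langle\A\x,\H\z\rangle|$. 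The term $\delta_1$ is bounded above by $\sup_{\x\in\sset}\left|\|\A\x\|_2^2-\|\x\|_2^2\right|$, which is exactly the standard restricted isometry constant of the randomly subsampled orthonormal matrix $\A=\sqrt{n/M}\,\mR_{\Omega'}\G$. I would invoke the known RIP bound for such systems: the three lower bounds on $m$ involving $c_7,c_8,c_9$ and the $n^{-c_9}$ term in the stated failure probability are precisely of the form produced by those results, so this step contributes $\delta_1\leq\delta/2$ on the corresponding event.

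The crux is controlling the cross term $\delta_2$. For a fixed pair of supports $S$ (with $|S|=s$) and $T$ (with $|T|=k$), the supremum of $|\z^*\H^*\A\x|$ over the joint sphere $\|\x\|_2^2+\|\z\|_2^2=1$ equals $\tfrac12\|(\H^*\A)_{T,S}\|_{\ttwo}$, whence
\begin{align*}
\delta_2=\sup_{|S|=s,\,|T|=k}\left\|(\H^*\A)_{T,S}\right\|_{\ttwo}.
\end{align*}
It therefore suffices to show that with high probability every $k\times s$ submatrix of $\H^*\A=\sqrt{n/M}\,\H^*\mR_{\Omega'}\G$ has operator norm at most $\delta/2$. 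I would treat $\|(\H^*\A)_{T,S}\|_{\ttwo}$ as the supremum of a process driven by the independent Bernoulli selectors $\{\lambda_i\}$, whose mean is small because the two bases are mutually incoherent ($\mu(\G)$ bounded and $\mu(\H)\sim1/\sqrt{M}$). A Bernstein-type concentration inequality for the selectors, with variance governed by $\mu^2(\G)$ and $\mu^2(\H)$, combined with covering nets on the two spheres and a union bound over the $\binom{n}{s}\binom{M}{k}$ support choices, should yield a lower bound of the form $m\geq c_{10}\delta^{-2}k\,n\mu^2(\G)\log^2 k\log^2 n$ and a failure probability of the claimed order $\tn^{-\log^2 s\log n}$. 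Taking the union of the two events then gives $\skripc\leq\delta_1+\delta_2\leq\delta$.

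I expect the uniform control of $\delta_2$ to be the main obstacle, for two reasons. First, unlike Lemma~\ref{lem: useful lemma bound the inner product}, there is no independent random diagonal to exploit here; all the randomness resides in the subsampling pattern, so the concentration must be extracted from the selectors $\lambda_i$ alone, and the expectation of $\langle\A\x,\H\z\rangle$ is not identically zero and must be shown negligible via incoherence. Second, and most importantly, this is where the unusual \emph{upper} bound $m\leq c_{11}\delta^2 n$ enters: the $\sqrt{n/M}$ normalization inflates the entries of $\A$ as $M$ grows, and as $m$ approaches $n$ the subsampling becomes nearly complete, $\A$ tends to a unitary matrix, and the signal and corruption subspaces can no longer be separated. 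Keeping $m$ bounded away from $n$ is thus essential to force the deterministic (bias) contribution to $\delta_2$ below $\delta/2$. Balancing this upper bound against the lower bounds required for concentration is the delicate point; the routine parts—net constructions, Bernstein moment estimates, and the union-bound bookkeeping—I would defer to the appendix.
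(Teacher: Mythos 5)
Your overall decomposition is the same as the paper's: $\skripc\leq\delta_1+\delta_2$, with $\delta_1$ handled by the standard RIP of the subsampled orthonormal matrix, and you correctly diagnose the two structural difficulties of the cross term --- that the randomness lives only in the selectors $\lambda_i$ so $\langle\A\x,\H\z\rangle$ has a nonzero mean, and that the upper bound $m\leq c_{11}\delta^2 n$ exists precisely to kill this bias. This matches the paper, which splits $\llambda=\sqrt{\tfrac{m(n-m)}{n^2}}\,\d+\tfrac{m}{n}\one$ and bounds the bias term $t_2\leq\tfrac12\tfrac{m^2}{Mn}$ by Cauchy--Schwarz (no incoherence needed), which is $\leq\delta/2$ exactly when $m\lesssim\delta^2 n$ and $M\sim m$ (the latter via Bernstein, whose only role in the paper is to concentrate the random sample size $M$ around $m$).

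The gap is in your concentration mechanism for the centered part. You propose Bernstein for the selectors plus nets on the two spheres plus a union bound over the $\binom{n}{s}\binom{M}{k}$ supports, and assert this ``should yield'' $m\gtrsim\delta^{-2}k\,n\mu^2(\G)\log^2k\log^2n$. It cannot. For a fixed $(\x,\z)$ the Bernstein exponent is $\sim\delta^2 m/V$ with variance $V\lesssim\min\bigl(s\,n\mu^2(\G)/m,\;k/m\bigr)$, while the union-bound entropy cost is $\sim s\log(n/s)+k\log(M/k)$; matching these forces $m\gtrsim\delta^{-2}\bigl(s\,n\mu^2(\G)+k\bigr)\bigl(s\log n+k\log m\bigr)$, i.e.\ bounds quadratic in the sparsities ($s^2$, $sk$, $k^2$), not the near-linear bounds claimed in the theorem. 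This is the classical obstruction for structured (bounded orthonormal) measurements: support-wise union bounds lose a factor of the sparsity, and chaining is required to recover it. The paper's actual route is to observe that after centering, $\d$ is a mean-zero, unit-variance subgaussian vector, so $t_1=\sup_{\r\in\mathcal{A}_{\r}}|\langle\r,\d\rangle|$ with $\r=\sqrt{\tfrac{n}{M}}\,(\z^*\H^*\mR_{\Omega'}\diag(\G\x))^*$ is exactly the type of process handled by Lemma \ref{lem: useful lemma bound the inner product}: one invokes the Krahmer--Mendelson--Rauhut suprema-of-chaos bound (Proposition \ref{prop: bounds on chaos processes}), estimating $d_F$, $d_{\ttwo}$ and the $\gamma_2$-functional via Dudley's integral with two covering-number regimes (volumetric for small $u$, Maurey's empirical method for large $u$). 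It is this scale-dependent covering argument --- not a support union bound --- that produces the $\log^2 s\log^2 n$ and $\log^2 k\log^2 n$ factors. Your reduction $\delta_2=\sup_{|S|=s,|T|=k}\|(\H^*\A)_{T,S}\|_{\ttwo}$ is correct but leads you toward the lossy argument; without importing the chaining machinery (or an equivalent), the proof as proposed does not reach the stated sample-complexity bounds.
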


When $\G$ is a bounded orthonormal basis, i.e., $\mu(\G)\sim 1/\sqrt{n}$, the bound on the sparsity of $\xs$ can be relaxed to $m\geq\bigo(\delta^{-2}s \log^2s\log^2 n, \delta^{-2}k \log^2 k\log^2 n)$, which implies that a sparse signal can be exactly recovered by tractable $l_1$ minimization even if the measurements are affected by corruption sparse on some bounded domain. A bounded orthonormal basis can include the Fourier transform or the Hadamard transform. In addition, in CS-based OFDM where the pilot is generated from a Golay sequence and a random subsampler is employed at the receiver (Section \ref{sec: potential apps}), the effective orthonormal basis is also bounded, i.e., $\mu(\F^*\diag(\g)\F)\sim 1/\sqrt{n}$ \cite{li2013convolutional}.

\subsection{Nonconvex optimization}
We have shown the $(s,k)$-RIP for two popular classes of structured sensing matrices, and proven the performance guarantee for the recovery of the sparse signal and corruption via the $l_1$-norm minimization algorithm \eqref{eqn: penalized rec}. However, our $(s,k)$-RIP analysis on the structured sensing matrices is also applicable to proving the recovery guarantee for nonconvex optimization. Consider the following formulation of the problem
\begin{align}
\y=\A\xs+\H\zs,
\end{align}
It was demonstrated in \cite{filipovic2014reconstruction} that the unique minimizer of the $l_p$ minimization problem ($0<p<1$)
\begin{align}
\min_{\x,\z}\|\x\|_p^p+\nu\|\z\|_p^p\quad\quad \mathrm{s.t.} \quad\quad \A\x+\H\z=\y,
\end{align}
is exactly the pair $(\xs,\zs)$ if the combined matrix $[\A, \H]$ satisfies the $(s,k)$-RIP, where $\nu$ is the regularization parameter. In addition, the $l_p$ minimization approach still provides stable recovery even when there is additional dense noise as long as the $(s,k)$-RIP holds \cite{saab2008stable,filipovic2014reconstruction}. The $l_p$ minimization problem can be numerically solved via an iteratively reweighted least squares (IRLS) method \cite{chartrand2008iteratively}. However, \cite{filipovic2014reconstruction} only considers the sensing model with $\A$ being random Gaussian matrices and $\H$ being an identity matrix. With our $(s,k)$-RIP analysis, many structured sensing matrices can be employed to provide exact/stable recovery performance for this corrupted sensing problem via $l_p$ minimization.

\subsection{Comparison with related literature}
In this part, we compare our main results with related literature.
\subsubsection{Sparse signal, sparse corruption}
\cite{li2013compressed} proved that sensing matrices with independent Gaussian entries provide uniform recovery guarantee for corrupted CS by solving \eqref{eqn: penalized rec} for all vectors $\xs$ and $\zs$ satisfying $\|\x\|_0\leq \alpha m/(\log (n/m)+1)$ and $\|\zs\|_0\leq \alpha m$. The difference is that our theorems come with a tighter requirement on the sparsity of $\xs$ and the sparsity of $\zs$, which is a compensation on the employment of structured measurements.

\cite{li2013compressed} also proved the recovery guarantee for structured sensing matrices that belong to the framework proposed in \cite{candes2011probabilistic}. Here, faithful recovery is guaranteed provided that $\|\x\|_0\leq \alpha m/(\mu^2\log^2 n)$ and $\|\zs\|_0\leq \beta m/\mu^2$, where $\mu$ is the coherence of the sensing matrix. \cite{Nguyen2013exact} considered the corrupted CS with sensing matrices that are randomly subsampled orthonormal matrix, and proved similar results. It is noted that the requirements on the sparsity of $\xs$ in these works seem less strict than that in our results. However, in both \cite{li2013compressed} and \cite{Nguyen2013exact}, performance guarantees of their structured sensing matrices rely on the assumption that the support set of $\xs$ or $\zs$ is fixed and the signs of the signal are independently and equally likely to be $1$ or $-1$ \cite[Section 1.3.2]{li2013compressed}\cite[Section II.B]{Nguyen2013exact} (i.e. a nonuniform recovery guarantee). While in our theorem, two classes of structured sensing matrices (including randomly subsampled orthgonal matrix) are shown to provide uniform recovery guarantee for corrupted CS.

We note that recently the uniform recovery guarantee for bounded orthonormal systems is proven in \cite{adcock2017compressed}. The bounded orthonormal systems is more general than the random subsampled orthonormal matrix considered in our second class. However, the corruption models are different: the corruption vector in \cite{adcock2017compressed} is sparse in time domain, whereas our theorem considers corruption in sparsifying domain with $\mu(\H)\sim 1/\sqrt{M}$. Due to this difference in the corruption model, the techniques used to prove the $(s,k)$-RIP (specifically, bound the inner product $\sup_{(\x,\z)\in\mT}|\langle\A\x,\H\z\rangle|$) are essentially different.

\subsubsection{Structured signal, structured corruption}
In a recent work \cite{Foygel2014corrupted}, sensing with random Gaussian measurements for general structured signals and corruptions (including sparse vectors, low rank matrix, sign vectors and etc) has been proven. However, our study departs from it in the following aspects: \cite{Foygel2014corrupted} proved a nonuniform recovery guarantee for the recovering of sparse signals from sparse corruptions and dense noise. In our paper, we established a uniform recovery guarantee for the corresponding problem. Moreover, \cite{Foygel2014corrupted} considered random Gaussian matrices, while we propose structured sensing matrices.

We have shown that a large class of structured sensing matrices can provide faithful recovery for the sparse sensing with sparse corruption. Whether such structured measurements can be applied in a general corrupted sensing problem (e.g. structured signal with structured corruption) is still open. Extension of our measurement framework to the general corrupted sensing problem is interesting for further study.

Other works related to the recovery of signals from corrupted measurements include \cite{chen2013robust,studer2014stable,pope2013probabilistic,wright2009robust,laska2009exact,laska2011democracy,chiconvex,fernandez2016demixing}. However, their models are different from the one in our paper.

\begin{figure*}[!t]
\centering
\subfloat[Gaussian Setting]{\includegraphics[width=3.5in]{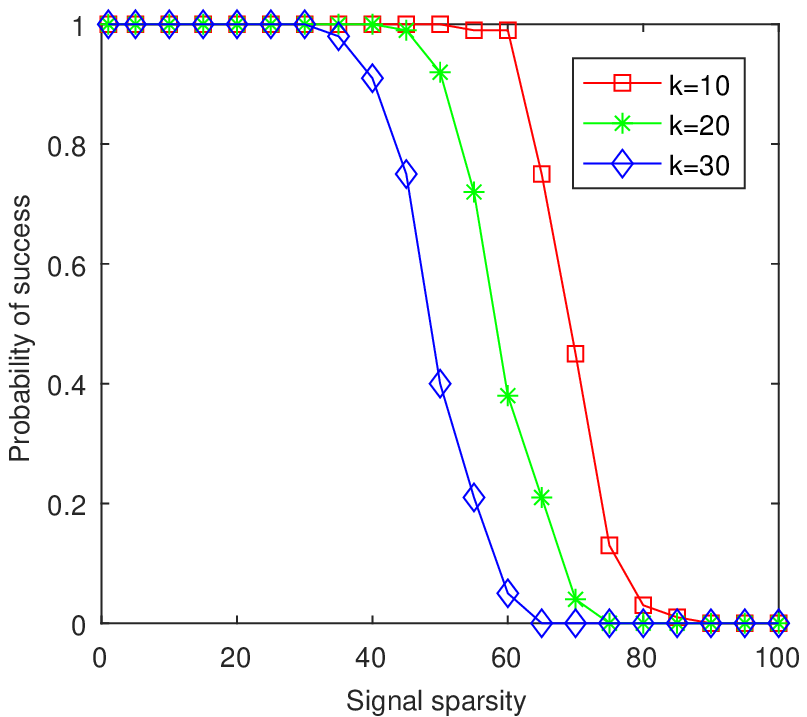}\label{fig: Eg1a}}
\hfil
\subfloat[Flat Setting]{\includegraphics[width=3.5in]{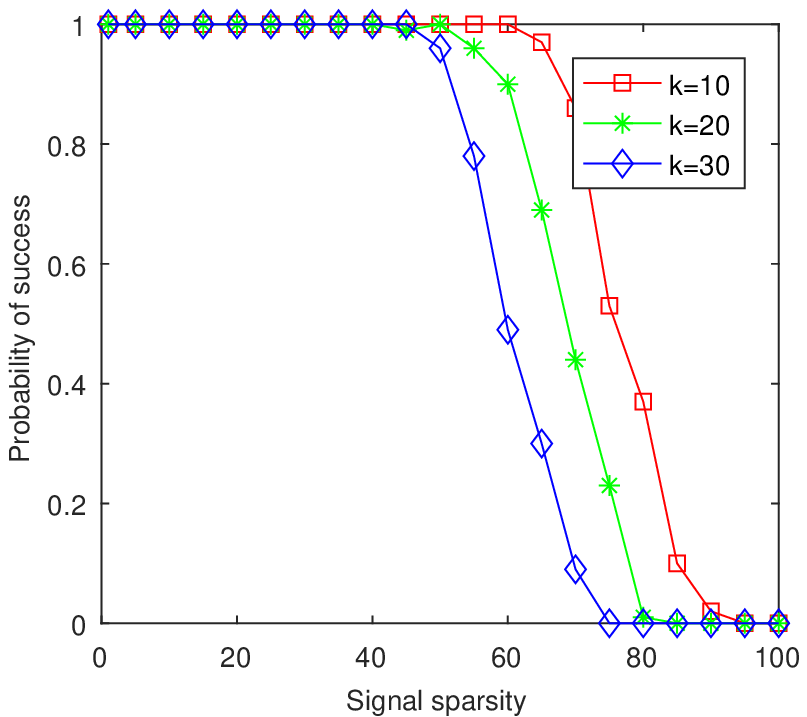}\label{fig: Eg1b}}
\caption{Probability of success as a function of the signal sparsity $s$ using penalized recovery with signal dimension $n=512$, number of measurements $m=256$, and the corruption sparsity $k=\{10, 20, 30\}$ for Mtx-I.}
\label{fig: Eg1}
\end{figure*}

\begin{figure*}[!t]
\centering
\subfloat[Gaussian Setting]{\includegraphics[width=3.5in]{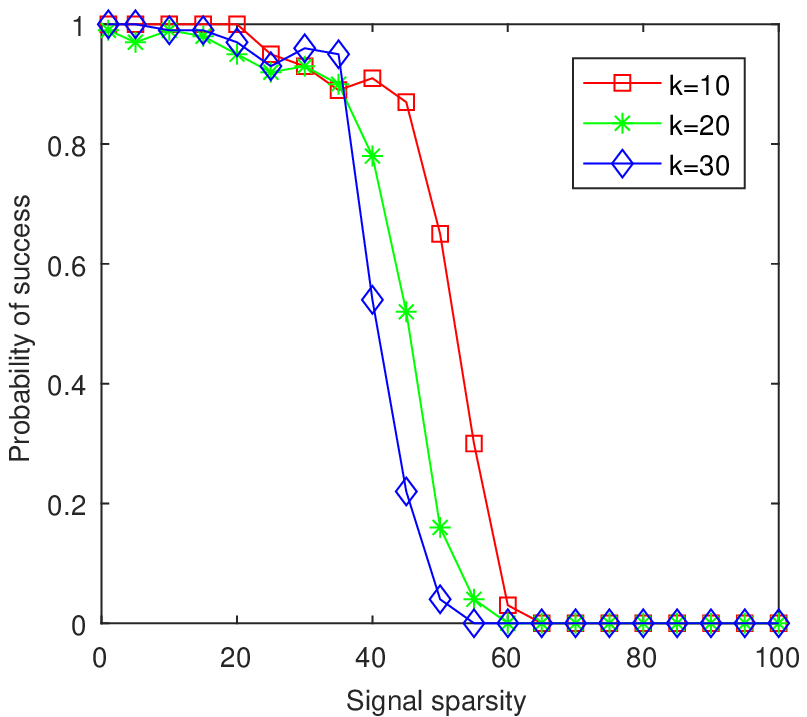}\label{fig: Eg2a}}
\hfil
\subfloat[Flat Setting]{\includegraphics[width=3.5in]{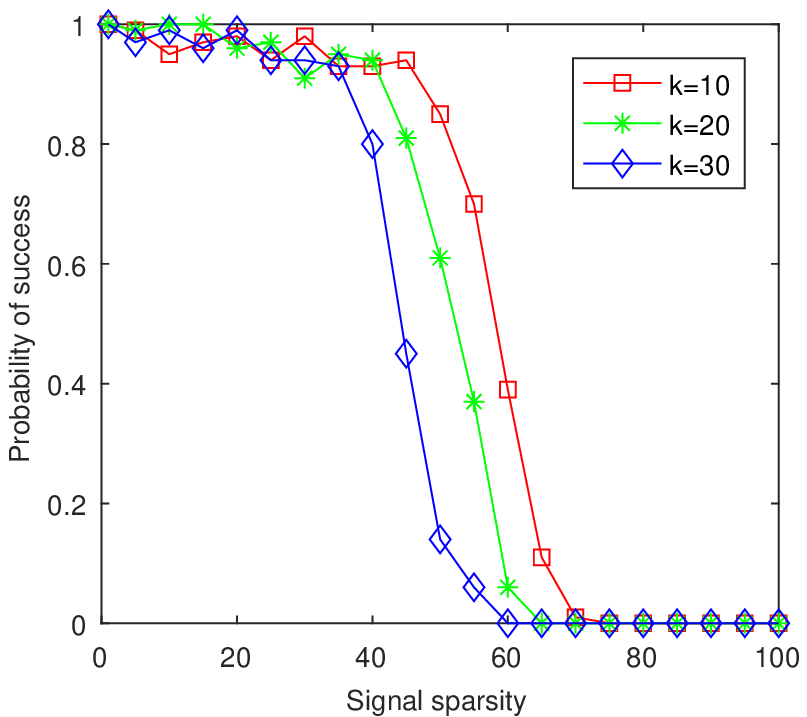}\label{fig: Eg2b}}
\caption{Probability of success as a function of the signal sparsity $s$ using penalized recovery with signal dimension $n=512$, number of measurements $m=256$, and the corruption sparsity $k=\{10, 20, 30\}$ for Mtx-II.}
\label{fig: Eg2}
\end{figure*}

\begin{remark}
\emph{We note that the value of the regularization parameter can be chosen as $\lambda=\sqrt{s/k}$. In practice, when no a priori knowledge on the sparsity levels of the signal and the corruption is available, $\lambda$ can usually be taken by cross validation. On the other hand, if it is known a priori that the corruption (the signal) is very sparse, one can increase (decrease) the value of $\lambda$ to improve the overall recovery performance. Similar discussion on the theoretical and practical settings of the regularization parameter has also been noted in \cite[Section 1.3.3]{li2013compressed}, \cite[Section II.E, Section VII]{Nguyen2013exact}, \cite[Section III.B]{Foygel2014corrupted}. In addition, an iteratively reweighted $l_1$ minimization method can be used to adaptively improve the setting of $\lambda$ in practice \cite{adcock2017compressed}.}
\end{remark}

\section{Numerical Simulations}\label{sec: simulations}
In this section, we verify and reinforce the theoretical results of Section \ref{sec: main} with a series of simulations. We present experiments to test the recovery performance of the penalized recovery algorithm for the proposed structured sensing matrices. In each experiment, we used the CVX Matlab package \cite{cvx,gb08} to specify and solve the convex recovery programs.

Two different ways of generating sparse vectors are considered:
\begin{itemize}
  \item Gaussian setting: the nonzero entries are drawn from a Gaussian distribution and their locations are chosen uniformly at random,
  \item Flat setting: the magnitudes of nonzero entries are $1$ and their locations are chosen uniformly at random.
\end{itemize}

\subsection{Penalized Recovery}\label{subsec: simu Eg1}
This experiment is to investigate the empirical recovery performance of the penalized recovery algorithm \eqref{eqn: penalized rec} when the dense noise is zero. Here, the sensing matrix (Mtx-I) $\A=\U\D\B$ of size $m\times n$ with $m=256$ and $n=512$ is constructed as below.
\begin{enumerate}
  \item Arbitrarily select $m=256$ rows from a $512\times 512$ Hadamard matrix to form a new matrix, which is then normalized by $1/\sqrt{m}$ to form the UTF $\U$.
  \item The diagonal entries of the diagonal matrix $\D$ are i.i.d. Bernoulli random variables.
  \item $\B$ is a normalized Hadamard matrix.
\end{enumerate}
We vary the signal sparsity and the corruption sparsity with $s\in[1,100]$ and $k\in\{10,20,30\}$. For each pair of $(s,k)$, we draw a sensing matrix as described above and perform the following experiment $100$ times:
\begin{enumerate}
  \item Generate $\xs$ with sparsity $s$ by the Gaussian setting
  \item Generate $\zs$ with sparsity $k$ by the Gaussian setting
  \item Solve \eqref{eqn: penalized rec} by setting $\lambda=1$
  \item Declare success if\footnote{This criterion indicates that both $\xs$ and $\zs$ have been faithfully recovered.}
  \begin{align*}
  \|\xr-\xs\|_2/\|\xs\|_2+\|\zr-\zs\|_2/\|\zs\|_2<10^{-3}
  \end{align*}
\end{enumerate}
The fraction of successful recovery averaged over the $100$ iterations is presented in Fig. \ref{fig: Eg1a}. To demonstrate the performance for signals and corruptions that do not have i.i.d. signs, the experiment is repeated by generating the sparse vectors $\xs$ and $\zs$ based on the Flat setting as shown in Fig. \ref{fig: Eg1b}. It can be seen that in both scenarios the performance improves as the sparsity of the corruption decreases.

Next, we demonstrate the performance of the penalized recovery algorithm when the sensing matrix is from a randomly subsampled orthonormal matrix. The sensing matrix (Mtx-II) $\A$ is a collection of randomly selected $M=256$ rows from a $512\times 512$ Hadamard matrix, and normalized by $\sqrt{n/M}$. The corruption is $\H\z$, where $\H$ is an $M\times M$ normalized Hadamard matrix. For each pair of $(s,k)$, we repeat the above steps $100$ times to obtain the probability of success (see Fig. \ref{fig: Eg2}). It is noted that the recovery performance of Mtx-I is better than that of Mtx-II. This seems consistent with our theoretical analysis as the random subsampled orthonormal matrix shows more stringent recovery condition than the UDB framework (see Theorem \ref{thm: the skRIP of udb} and \ref{thm: skRIP of rdn unitary mtx}). However, since the $(s,k)$-RIP is a sufficient condition for the recovery guarantee, it may not fully reflect the performance gap between the two classes of structured sensing matrices. Further investigation based on a necessary and sufficient condition for the recovery guarantee of the corrupted CS problem is a difficult, but interesting open question.

\subsection{Stable recovery}
We study the stability of the penalized recovery algorithms when the dense noise is nonzero, i.e., $\varepsilon\neq0$, and compare the performance of structured sensing matrix (Mtx-I) with random Gaussian sensing matrix. In this experiment, the $256$-by-$512$ sensing matrix (Mtx-I) is constructed as in previous subsection. We fix the signal and corruption sparsity levels at $s=10$ and $k=10$ respectively. The dense noise $\w$ consists of i.i.d. Bernoulli entries with amplitude $\varepsilon$. We vary the noise level with $\varepsilon\in[0, 0.1]$, and perform the following experiment $100$ times for each $\varepsilon$:
\begin{enumerate}
  \item Generate $\xs$ with $s=10$ by the Gaussian setting
  \item Generate $\zs$ with $k=10$ by the Gaussian setting
  \item Solve penalized recovery (p-rec) algorithm \eqref{eqn: penalized rec} by setting $\lambda=1$
  \item Record the empirical recovery error $\|\xr-\xs\|_2+\|\zr-\zs\|_2$
\end{enumerate}

\begin{figure}[!t]
\centering
\includegraphics[width=3.5in]{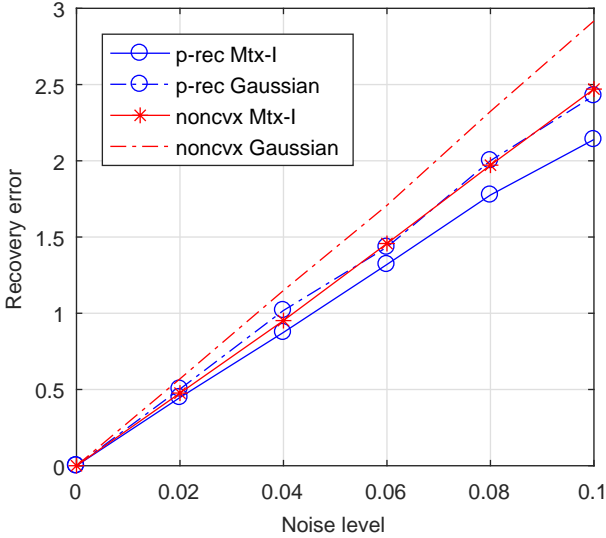}
\caption{Empirical recovery error versus the noise level $\varepsilon$.}
\label{fig: Eg3}
\end{figure}
An average recovery error is then obtained for each $\varepsilon$. Fig. \ref{fig: Eg3} depicts the average error with varying noise levels. The results in Theorems \ref{thm: recovery condition} and \ref{thm: the skRIP of udb} imply that the recovery errors are bounded by the noise level $\varepsilon$ up to some constants. Fig. \ref{fig: Eg3} clearly shows this linear relationship. In addition, we repeat the above experiments with an iteratively reweighted least squares approach \cite{chartrand2008iteratively} using $p=0.5$. As shown in Fig. \ref{fig: Eg3}, the structured sensing matrix is still able to exhibit stable performance by the nonconvex optimization algorithm.

\section{Conclusion}\label{sec: conclusion}
We have studied a generalized CS problem where the measurement vector is corrupted by both sparse noise and dense noise. We have proven that structured random matrices encompassed in the UDB framework or the randomly subsampled orthonormal system can satisfy the sufficient condition, i.e., the $(s,k)$-RIP. These structured matrices can therefore be applied to provide faithful recovery of both the sparse signal and the corruption by the penalized optimization algorithm as well as the nonconvex optimization algorithm. Our simulations have clearly illustrated and reinforced our theoretical results.

\appendices
\section{Proof of Lemma \ref{lem: useful lemma bound the inner product}}\label{appdix: proof of useful lemma}
Throughout the proof in this and the following sections, $C$ and $c$ denote an absolute constant whose values may change from occurrence to occurrence.

A metric space is denoted by $(T,d)$, where $T$ is a set and $d$ is the notion of distance (metric) between elements of the set. For a metric space $(T,d)$, the covering number $N(T,d,u)$ is the minimal number of open balls of radius $u$ needed to cover $(T,d)$. A subset $\overbar{T}$ of $T$ is called a $u$-net of $T$ if every point $\x\in T$ can be approximated to within $u$ by some point $\bar{\x}\in \overbar{T}$, i.e., $d(\x,\bar{\x})\leq u$. The minimal cardinality of $\overbar{T}$ is equivalent to the covering number $N(T,d,u)$. The $p$-th moment (or the $L_p$-norm) of a random variable is denoted by $\|X\|_{L_p}=(\mE|X|^p)^{1/p}$.

We aim to upper bound the variable $\Delta:=\sup_{\v\in\mA_{\v}} |\langle\v,\bxi\rangle|$ which is the supremum of a stochastic process with the index set $\mA_{\v}$. To complete the proof, we require the following important result due to Krahmer et al.:

\begin{theorem}\cite[Theorem 3.5 (a)]{krahmer2014suprema}\label{thm: bounds on chaos processes}
Let $\mset$ be a set of matrices, and let $\bxi$ be a random vector whose entries $\xi_j$ are independent, mean $0$, variance $1$, and $L$-subgaussian random variables. Set
\begin{align*}
d_F(\mset)&=\sup_{\S\in\mset}\|\S\|_{F},\quad d_{\ttwo}(\mset)=\sup_{\S\in\mset}\|\S\|_{\ttwo},\\
N_{\mset}(\bxi)&:=\sup_{\S\in\mset}\|\S\bxi\|_2,\quad E=\gamma_2(\mset,\|\cdot\|_{\ttwo})+d_F(\mset).
\end{align*}
Then, for every $p\geq 1$,
\begin{align}\label{eqn: suprema of chaos prob bound}
\|N_{\mset}(\bxi)\|_{L_p}\leq C(E+\sqrt{p}d_{\ttwo}(\mset)),
\end{align}
where $C$ is a constant depends only on $L$.
\end{theorem}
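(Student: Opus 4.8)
My plan is to derive the moment bound \eqref{eqn: suprema of chaos prob bound} from a concentration estimate for the \emph{centered} second-order chaos process and to prove the latter by a decoupling-plus-generic-chaining argument. Write $Z:=\sup_{\S\in\mset}\bigl|\,\|\S\bxi\|_2^2-\mE\|\S\bxi\|_2^2\,\bigr|$. Because the $\xi_j$ are independent with mean $0$ and variance $1$, we have $\mE\|\S\bxi\|_2^2=\mE\langle\bxi,\S^*\S\bxi\rangle=\text{tr}(\S^*\S)=\|\S\|_F^2\le d_F(\mset)^2$, so that
\begin{align*}
N_{\mset}(\bxi)^2=\sup_{\S\in\mset}\|\S\bxi\|_2^2\le d_F(\mset)^2+Z .
\end{align*}
Subadditivity of the square root then gives $N_{\mset}(\bxi)\le d_F(\mset)+\sqrt{Z}$, hence $\|N_{\mset}(\bxi)\|_{L_p}\le d_F(\mset)+\|Z\|_{L_{p/2}}^{1/2}$. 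Thus it suffices to establish a moment bound of the form $\|Z\|_{L_{q}}\lesssim E^2+\sqrt{q}\,d_{\ttwo}(\mset)E+q\,d_{\ttwo}(\mset)^2$ with $E=\gamma_2(\mset,\opmetric)+d_F(\mset)$: taking $q=p/2$, splitting the square root termwise, and collapsing the cross term by $\sqrt{ab}\le\tfrac12(a+b)$ with $a=E,\ b=\sqrt{p}\,d_{\ttwo}(\mset)$ yields $\|N_{\mset}(\bxi)\|_{L_p}\lesssim E+\sqrt{p}\,d_{\ttwo}(\mset)$, using $d_F(\mset)\le E$.

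To bound $Z$ I would work with the increments of $X_\S:=\langle\bxi,\S^*\S\bxi\rangle-\text{tr}(\S^*\S)$ and separate the quadratic form into its diagonal and off-diagonal parts. The diagonal part $\sum_j(\S^*\S)_{jj}(\xi_j^2-1)$ is a sum of independent centered subexponential variables; Bernstein's inequality controls it at the two scales $\|\S\|_{\ttwo}\|\S\|_F\le d_{\ttwo}(\mset)d_F(\mset)$ (subgaussian) and $\|\S\|_{\ttwo}^2\le d_{\ttwo}(\mset)^2$ (subexponential), which are exactly the sources of the $\sqrt q\,d_{\ttwo}E$ and $q\,d_{\ttwo}^2$ fluctuation terms above. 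For the off-diagonal part I would apply a decoupling inequality to pass to $\langle\bxi,(\S^*\S)_{\mathrm{off}}\bxi'\rangle$ with $\bxi'$ an independent copy of $\bxi$.

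The core step is a double application of generic chaining to the decoupled chaos. Conditioning on $\bxi'$ makes $\S\mapsto\langle\bxi,(\S^*\S)_{\mathrm{off}}\bxi'\rangle$ a \emph{linear}, hence subgaussian, process in $\bxi$, so Talagrand's chaining bounds its conditional supremum by $\gamma_2(\mset,\rho_{\bxi'})$, where $\rho_{\bxi'}(\S,\T)=\|(\S^*\S-\T^*\T)_{\mathrm{off}}\bxi'\|_2$. Writing $\M:=\S^*\S-\T^*\T=\tfrac12[(\S-\T)^*(\S+\T)+(\S+\T)^*(\S-\T)]$ gives the two bounds $\|\M\|_F\le 2d_F(\mset)\|\S-\T\|_{\ttwo}$ and $\|\M\|_{\ttwo}\le 2d_{\ttwo}(\mset)\|\S-\T\|_{\ttwo}$. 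Hence $\rho_{\bxi'}$ is itself a subgaussian functional of $\bxi'$ whose expectation is $\lesssim d_F(\mset)\|\S-\T\|_{\ttwo}$ (via $\mE\|\M\bxi'\|_2\le\|\M\|_F$) and whose fluctuations are governed by $d_{\ttwo}(\mset)\|\S-\T\|_{\ttwo}$. A second chaining, now over $\bxi'$, of the functional $\gamma_2(\mset,\rho_{\bxi'})$ therefore produces the homogeneous $\gamma_2(\mset,\opmetric)\bigl(\gamma_2(\mset,\opmetric)+d_F(\mset)\bigr)$ and $d_{\ttwo}(\mset)d_F(\mset)$ contributions to $E^2$, together with the $q$-dependent tails.

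The hardest part will be this chaining-within-chaining: the metric $\rho_{\bxi'}$ in the inner chaining is itself random, and controlling $\mE_{\bxi'}\,\gamma_2(\mset,\rho_{\bxi'})$ --- along with its higher moments --- requires chaining the $\gamma_2$ functional against the operator-norm metric a second time while tracking the interaction of the two scales $d_F(\mset)$ and $d_{\ttwo}(\mset)$. It is precisely this decoupled, conditionally-linear route that keeps both chainings subgaussian and so avoids the spurious $\gamma_1(\mset,\opmetric)$ term that a direct two-metric chaining of $X_\S$ (via Hanson--Wright increment tails) would otherwise introduce and that cannot in general be absorbed into $E^2$. Once the displayed bound on $\|Z\|_{L_q}$ is in hand, the reduction of the first paragraph delivers \eqref{eqn: suprema of chaos prob bound}.
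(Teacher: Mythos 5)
You should first note that this paper does not prove this statement at all: it is quoted verbatim from \cite[Theorem 3.5(a)]{krahmer2014suprema} and used as a black box, so your attempt has to be judged against the Krahmer--Mendelson--Rauhut (KMR) proof. Your first paragraph is correct and mirrors exactly how KMR themselves derive 3.5(a) from their moment bound on the centered chaos: $N_{\mset}(\bxi)^2\le d_F(\mset)^2+Z$, subadditivity of the square root, $d_{\ttwo}(\mset)\le d_F(\mset)\le E$, and $\sqrt{ab}\le\frac12(a+b)$ to collapse the cross term. The problem is the core step, the bound $\|Z\|_{L_q}\lesssim E^2+\sqrt{q}\,d_{\ttwo}(\mset)E+q\,d_{\ttwo}(\mset)^2$, which you propose to prove by decoupling plus chaining-within-chaining. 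That is precisely the pre-KMR route, and it does not yield this bound: your closing assertion that the conditionally-linear approach \emph{avoids the spurious $\gamma_1$ term} is the unproven step, and as outlined it is false. After decoupling you face a generic bilinear chaos $\langle\bxi,\M\bxi'\rangle$ indexed by $\M=(\S^*\S-\T^*\T)_{\mathrm{off}}$, and your argument uses no structure beyond the two envelopes $\|\M\|_F\lesssim d_F(\mset)\|\S-\T\|_{\ttwo}$ and $\|\M\|_{\ttwo}\lesssim d_{\ttwo}(\mset)\|\S-\T\|_{\ttwo}$. Run your second chaining with these: along an admissible sequence with links $\Delta_r$, the inner bound is $\sup_{\S}\sum_r 2^{r/2}\|\Delta_r\bxi'\|_2$, each $\|\Delta_r\bxi'\|_2$ concentrates about $\|\Delta_r\|_F$ at subgaussian scale $\|\Delta_r\|_{\ttwo}$, and uniformity over the up to $2^{2^{r+1}}$ links at level $r$ forces deviations $t\sim 2^{r/2}$; multiplied by the outer weight $2^{r/2}$ this produces $\sum_r 2^{r}\|\Delta_r\|_{\ttwo}$, i.e.\ a $\gamma_1(\mset,\opmetric)$ contribution that cannot in general be absorbed into $E^2$. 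This is not an artifact of sloppy bookkeeping: for decoupled chaos over a general matrix set the $\gamma_1$ term (or worse, by Latała's characterization) is genuinely needed, and decoupling-based predecessors of this theorem (e.g.\ Rauhut--Romberg--Tropp for circulant matrices) carry exactly such losses --- removing them was the whole point of \cite{krahmer2014suprema}.

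KMR's actual proof is a \emph{direct}, undecoupled generic chaining on the process $\S\mapsto\|\S\bxi\|_2$, exploiting the increment bound $\bigl|\,\|\S\bxi\|_2-\|\T\bxi\|_2\,\bigr|\le\|(\S-\T)\bxi\|_2$ together with the Hanson--Wright-type concentration of $\|\M\bxi\|_2$ about $\|\M\|_F$ at scale $\|\M\|_{\ttwo}$; keeping the positive quadratic structure $\S^*\S$ intact (rather than splitting into off-diagonal differences) is what lets the chain close with only $\gamma_2(\mset,\opmetric)$, $d_F(\mset)$ and $d_{\ttwo}(\mset)$. Two secondary gaps in your sketch: the diagonal part $\sup_{\S\in\mset}\bigl|\sum_j(\S^*\S)_{jj}(\xi_j^2-1)\bigr|$ is itself a supremum, so a pointwise Bernstein bound per matrix does not control it --- it needs its own chaining or absorption into the main argument; and for general $L$-subgaussian (unbounded) entries you cannot invoke convex-Lipschitz concentration for $\|\M\bxi'\|_2$ in the second chaining, so that step would have to be rerouted through Hanson--Wright as well.
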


Here, $N_{\mset}(\bxi)$ represents the supremum of certain stochastic processes indexed by a set of matrices $\mA$. The above Proposition implies that $N_{\mset}(\bxi)$ can be bounded by three parameters: the suprema of Frobenius norms $d_F(\mset)$, the suprema of operator norms $d_{\ttwo}(\mset)$ and a $\gamma_2$-functional $\gamma_2(\mset,\|\cdot\|_{\ttwo})$, which can be bounded in terms of the covering numbers $N(\mset,\|\cdot\|_{\ttwo},u)$ as below.
\begin{align*}
\gamma_2(\mset,\|\cdot\|_{\ttwo})\leq c\int_{0}^{d_{\ttwo}(\mset)}\sqrt{\log N(\mset,\|\cdot\|_{\ttwo},u)} \ \mathrm{d}u,
\end{align*}
where the integral is known as Dudley integral or entropy integral \cite{talagrand2005generic}.

We can transfer the estimates on the moment \eqref{eqn: suprema of chaos prob bound} to a tail bound by the standard estimate due to Markov's inequality (see \cite[Proposition 7.15]{foucart2013mathematical}).
\begin{prop}\label{prop: bounds on chaos processes}
Following the definitions in Theorem \ref{thm: bounds on chaos processes}, for $t\geq 1$,
\begin{align}
\Prob(N_{\mset}(\bxi)\geq CE+Cd_{\ttwo}(\mset)t)\leq\exp(-t^2).
\end{align}
\end{prop}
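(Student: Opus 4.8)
The plan is to convert the $p$-th moment estimate \eqref{eqn: suprema of chaos prob bound} furnished by Theorem \ref{thm: bounds on chaos processes} into a tail bound via the standard moment method, namely Markov's inequality applied to a high power of the variable followed by an optimal choice of that power. Write $X:=N_{\mset}(\bxi)=\sup_{\S\in\mset}\|\S\bxi\|_2$, which is nonnegative by definition, so that $\mE X^p=\|X\|_{L_p}^p$. The first step is the abstract observation that for every $p\geq 1$, Markov's inequality yields
\begin{align*}
\Prob\left(X\geq \ex\|X\|_{L_p}\right)=\Prob\left(X^p\geq \ex^p\|X\|_{L_p}^p\right)\leq\frac{\mE X^p}{\ex^p\|X\|_{L_p}^p}=\ex^{-p},
\end{align*}
which is precisely the estimate recorded in \cite[Proposition 7.15]{foucart2013mathematical}; here $\ex$ is the base of the natural logarithm.

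The second step is the parameter selection that couples this decay to the threshold we want. Since the moment bound reads $\|X\|_{L_p}\leq C(E+\sqrt{p}\,d_{\ttwo}(\mset))$, I would choose $p=t^2$, which is admissible because $t\geq 1$ forces $p\geq 1$. This choice turns the factor $\sqrt{p}$ into exactly $t$, matching the linear dependence on $t$ in the target threshold, and turns the exponent into $\ex^{-p}=\ex^{-t^2}=\exp(-t^2)$, the desired probability. Substituting into the moment bound gives
\begin{align*}
\ex\|X\|_{L_{t^2}}\leq \ex C\left(E+t\,d_{\ttwo}(\mset)\right)=\ex C\,E+\ex C\, d_{\ttwo}(\mset)\, t.
\end{align*}

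The final step is pure monotonicity of the tail. Because the threshold $\ex C\,E+\ex C\, d_{\ttwo}(\mset)\,t$ dominates $\ex\|X\|_{L_{t^2}}$, the event $\{X\geq \ex C\,E+\ex C\, d_{\ttwo}(\mset)\,t\}$ is contained in $\{X\geq \ex\|X\|_{L_{t^2}}\}$, whose probability is at most $\exp(-t^2)$ by the first step with $p=t^2$. Renaming the absolute constant $\ex C$ as $C$ (it still depends only on $L$, through Theorem \ref{thm: bounds on chaos processes}) then yields the stated bound. There is no genuine obstacle in this argument, as it is a textbook moment-to-tail conversion; the only point requiring care is the coupled choice $p=t^2$, which simultaneously reconciles the $\sqrt{p}$ appearing in the moment estimate with the factor $t$ in the threshold and produces the sub-Gaussian-type decay $\exp(-t^2)$.
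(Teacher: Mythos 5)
Your proposal is correct and is exactly the argument the paper invokes: the paper proves this proposition simply by citing the standard moment-to-tail conversion via Markov's inequality (\cite[Proposition 7.15]{foucart2013mathematical}), which is precisely your computation with the choice $p=t^2$ and the renaming of $\ex C$ as $C$. No discrepancy beyond the fact that you spell out the steps the paper leaves to the reference.
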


It can be observed that $\Delta$ can be expressed in the form of $N_{\mset}(\bxi)$, where $\S$ and $\mA$ are replaced with $\v$ and $\mA_{\v}$, respectively. Now, we only need to estimate the parameters $d_F(\vset)$, $d_{\ttwo}(\vset)$ and $\gamma_2(\vset,\|\cdot\|_{\ttwo})$ before bounding $\Delta$ by using Theorem \ref{prop: bounds on chaos processes}. Since $\vset$ is a set of vectors, we have $d_F(\vset)=d_{\ttwo}(\vset)$ and $\gamma_2(\vset,\|\cdot\|_{\ttwo})=\gamma_2(\vset,\|\cdot\|_2)$.

For any vector $\x\in\sset$, we denote by $\x^s$ the length-$s$ vector that retains only the non-zero elements in $\x$. And correspondingly for any vector $\b\in\C^n$, we denote by $\b^s$ the length-$s$ vector that retains only the elements that have the same indexes as those of the non-zero elements in $\x$. We have, for any $\v\in\vset$,
\begin{align}
\|\v\|_2&=\|\z^{*}\U\diag(\tB\x)\|_2\leq\|\z^{*}\U\|_2\|\tB\x\|_{\infty}\nonumber\\
        &=\sqrt{\frac{\tn}{m}}\|\z\|_2\max_{j\in[\tn]}\{|\langle \tB_{(j,:)},\x\rangle|\}\nonumber\\
        &=\sqrt{\frac{\tn}{m}}\|\z\|_2\max_{j\in[\tn]}\{|\langle \tB_{(j,:)}^{s},\x^{s}\rangle|\}\nonumber \\
        &\leq\sqrt{\frac{\tn}{m}}\mu(\tB)\sqrt{s}\|\x\|_2\|\z\|_2\leq\frac{1}{2}\sqrt{\frac{\tn}{m}}\mu(\tB)\sqrt{s},\nonumber
\end{align}
where the last inequality is due to $\|\x\|_2^2+\|\z\|_2^2=1$. Therefore,
\begin{align}\label{eqn: operator norm width 1}
d_F(\vset)=d_{\ttwo}(\vset)\leq \frac{1}{2}\sqrt{\frac{\tn}{m}}\mu(\tB)\sqrt{s}.
\end{align}

Following the same steps, we can alternatively obtain, for any $\v\in\vset$,
\begin{align}
\|\v\|_2&=\|\z^{*}\U\diag(\tB\x)\|_2\leq\|\z^{*}\U\|_{\infty}\|\tB\x\|_{2}\nonumber\\
        &\leq\frac{1}{2}\mu(\U)\sqrt{k}.\nonumber
\end{align}
This provides another upper bound
\begin{align}\label{eqn: operator norm width 2}
d_F(\vset)=d_{\ttwo}(\vset)\leq \frac{1}{2}\mu(\U)\sqrt{k}.
\end{align}

We note that both \eqref{eqn: operator norm width 1} and \eqref{eqn: operator norm width 2} are valid bounds, and they are not comparable to each other since the relationship between $s$ and $k$ is unknown. It will be clear later that both bounds are useful for computing the entropy integrals. In particular, \eqref{eqn: operator norm width 1} and \eqref{eqn: operator norm width 2} are used for computing $I_1$ and $I_2$ respectively (as in \eqref{eqn: gamma2 bound}).

Next, we bound $\gamma_2$-functional $\gamma_2(\vset,\|\cdot\|_2)$ by estimating the covering numbers $N(\vset,\|\cdot\|_2,u)$. The derivation is divided into two steps.

\emph{Step 1. Decompose $N(\vset,\|\cdot\|_2,u)$.} Let $\mD_1=\{\x\in\R^n:\|\x\|_2^2\leq 1,\|\x\|_0\leq s\}$ and define the semi-norm $\|\cdot\|_{K_1}$ as
\begin{align}
\|\x\|_{K_1}=\|\U\diag(\tB\x)\|_{\ttwo}\quad\quad \forall\x\in\R^n.
\end{align}
For the metric space $(\mD_1,\|\cdot\|_{K_1})$, we take $\overbar{\mD_1}$ to be a $\frac{u}{2}$-net of $\mD_1$ with $|\overbar{\mD_1}|=N(\mD_1,\|\cdot\|_{K_1},\frac{u}{2})$. Let $\mD_2=\{\z\in\R^m:\|\z\|_2^2\leq 1,\|\z\|_{0}\leq k\}$ and define the semi-norm $\|\cdot\|_{K_2}$ as
\begin{align}
\|\z\|_{K_2}=\|\tB^*\diag(\U^*\z)\|_{\ttwo}\quad\quad \forall\z\in\R^m.
\end{align}
For the metric space $(\mD_2,\|\cdot\|_{K_2})$, we take $\overbar{\mD_2}$ to be a $\frac{u}{2}$-net of $\mD_2$ with $|\overbar{\mD_2}|=N(\mD_2,\|\cdot\|_{K_2},\frac{u}{2})$.

Now, let $\overbar{\vset}=\{(\bar{\z}^*\U\diag(\tB\bar{\x}))^*:\bar{\x}\in\overbar{\mD_1}, \bar{\z}\in\overbar{\mD_2}\}$ and remark that $|\overbar{\vset}|\leq|\overbar{\mD_1}||\overbar{\mD_2}|$. It remains to show that for all $\v\in\vset$, there exists $\bar{\v}\in\bar{\vset}$ with $\|\v-\bar{\v}\|_2\leq u$.

For any $\v=(\z^*\U\diag(\tB\x))^*\in\vset$, there exist $\bar{\v}=(\bar{\z}^*\U\diag(\tB\bar{\x}))^*\in\bar{\vset}$ with $\bar{\x}\in\overbar{\mD_1}$ and $\bar{\z}\in\overbar{\mD_2}$ obeying $\|\x-\bar{\x}\|_{K_1}\leq\frac{u}{2}$ and $\|\z-\bar{\z}\|_{K_2}\leq\frac{u}{2}$. This gives
\begin{align*}
\|\v-\bar{\v}\|_2&=\|\z^*\U\diag(\tB\x)-\bar{\z}^*\U\diag(\tB\bar{\x})\|_2\\
                 &=\|\z^*\U\diag(\tB\x)-\z^*\U\diag(\tB\bar{\x})\\
                 &\quad\quad +\z^*\U\diag(\tB\bar{\x})-\bar{\z}^*\U\diag(\tB\bar{\x})\|_2\\
                 &\leq\|\z^*\U\diag(\tB(\x-\bar{\x}))\|_2+\|(\z-\bar{\z})^*\U\diag(\tB\bar{\x})\|_2\\
                 &=\|\z^*\U\diag(\tB(\x-\bar{\x}))\|_2+\|\bar{\x}^*\tB^*\diag(\U^*(\z-\bar{\z}))\|_2\\
                 &\leq\|\z\|_2\|\U\diag(\tB(\x-\bar{\x}))\|_{\ttwo}\\
                 &\quad\quad +\|\bar{\x}\|_2\|\tB^*\diag(\U^*(\z-\bar{\z}))\|_{\ttwo}\\
                 &\overset{(a)}{\leq}\|\x-\bar{\x}\|_{K_1}+\|\z-\bar{\z}\|_{K_2}\leq u,
\end{align*}
where $(a)$ is due to the fact that $\|\z\|_2\leq 1$ and $\|\bar{\x}\|_2\leq 1$.

Hence,
\begin{align*}
N(\vset,\|\cdot\|_2,u)&\leq |\overbar{\vset}|\\
                      &\leq N(\mD_1,\|\cdot\|_{K_1},u/2)N(\mD_2,\|\cdot\|_{K_2},u/2).
\end{align*}

The $\gamma_2$-functional $\gamma_2(\vset,\|\cdot\|_2)$ can now be estimated by
\begin{align}
\gamma_2(\vset,\|\cdot\|_2)&\leq c\int_{0}^{d_{\ttwo}(\mset)}\sqrt{\log N(\vset,\|\cdot\|_2,u)} \mathrm{d}u\nonumber\\
                           &\lesssim \underbrace{\int_{0}^{d_{\ttwo}(\mset)}\sqrt{\log N(\mD_1,\|\cdot\|_{K_1},u/2)} \ \mathrm{d}u}_{I_1}\nonumber\\
                           &\quad+\underbrace{\int_{0}^{d_{\ttwo}(\mset)}\sqrt{\log N(\mD_2,\|\cdot\|_{K_2},u/2)} \ \mathrm{d}u}_{I_2}. \label{eqn: gamma2 bound}
\end{align}

\emph{Step 2. Estimate the covering numbers $N(\mD_1,\|\cdot\|_{K_1},u/2)$ and $N(\mD_2,\|\cdot\|_{K_2},u/2)$ and the entropy integrals.} We estimate each covering number in two different ways. For small value of $u$, we use a volumetric argument. For large value of $u$, we use the Maurey method (\cite[Lemma 4.2]{krahmer2014suprema}, or \cite[Problem 12.9]{foucart2013mathematical}). Then, the resultant covering number estimates can be used to compute the entropy integrals $I_1$ and $I_2$. Similar techniques on the covering number estimation and the entropy integral computation have been used in the CS literature, i.e., \cite{rauhut2010compressive,krahmer2014suprema,eftekhari2012restricted,zhang2014modulated}.

From \cite[Equation (28)]{zhang2014modulated} and \eqref{eqn: operator norm width 1}, we have
\begin{align}\label{eqn: I1 bound}
I_1\lesssim \sqrt{\frac{s\tn}{m}}\mu(\tB)(\log s)(\log \tn).
\end{align}

It remains to estimate $N(\mD_2,\|\cdot\|_{K_2},u/2)$ and compute $I_2$.
\emph{$1)$ small $u$}. We observe that $\mD_2$ is a subset of the union of $\binom m k$ unit Euclidean balls $\Bset_2^{k}$,
\begin{align}
\Bset_2^{k}:=\{\z\in\R^m:\|\z\|_2\leq 1,|\supp(\z)|\leq k\}.
\end{align}
For any $\z\in\mD_2$,
\begin{align}
\|\z\|_{K_2}&=\|\tB^*\diag(\U^*\z)\|_{\ttwo}\leq \|\U^*\z\|_{\infty}\leq\max_{i\in[n]}|\langle\U_{(:,i)},\z\rangle|\nonumber\\
            &\leq\mu(\U)\|\z\|_1\leq\mu(\U)\sqrt{k}\|\z\|_2\leq\sqrt{\frac{k}{m}}\|\z\|_2, \label{eqn: K2 bound 1 intermidiate}
\end{align}
where the last step is due to the assumption that $\mu(\U)\sim\frac{1}{\sqrt{m}}$.
Therefore,
\begin{align}
N(\mD_2,\|\cdot\|_{K_2},u/2)&\leq\binom m k N(\Bset_2^{k},\|\cdot\|_{K_2},u/2)\nonumber\\
                            &\leq\binom m k N(\Bset_2^{k},\sqrt{\frac{k}{m}}\|\cdot\|_2,u/2)\nonumber\\
                            &\leq(\frac{em}{k})^k(1+4\sqrt{\frac{k}{m}}\frac{1}{u})^{k}, \label{eqn: K2 bound 1}
\end{align}
where the last inequality is an application of \cite[Proposition 10.1]{rauhut2010compressive} and \cite[Lemma C.5]{foucart2013mathematical}.

\emph{$2)$ large $u$}. For any $\z\in\mD_2$, we have $\|\z\|_1\leq\sqrt{k}\|\z\|_2\leq\sqrt{k}$, which gives
\begin{align*}
\mD_2\subset\sqrt{k}\Bset_1^m:=\{\z\in\R^m: \|\z\|\leq\sqrt{k}\}.
\end{align*}
Then,
\begin{align*}
N(\mD_2,\|\cdot\|_{K_2},u/2)&\leq N(\sqrt{k}\Bset_1^m,\|\cdot\|_{K_2},u/2)\\
                            &=N(\Bset_1^m,\|\cdot\|_{K_2},u/(2\sqrt{k})).
\end{align*}

Based on the Maurey method, for $0<u<\frac{1}{2}\mu(\U)\sqrt{k}$, the covering number can be estimated by \cite[Lemma 8.3]{rauhut2010compressive}
\begin{align}
\sqrt{\log N(\mD_2,\|\cdot\|_{K_2},u/2)}&\lesssim \sqrt{k}\mu(\U)\sqrt{\log \tn\log m}u^{-1}\nonumber\\
&\leq \sqrt{\frac{k}{m}}\sqrt{\log \tn\log m}u^{-1}.\label{eqn: K2 bound 2}
\end{align}

We note that the estimation based on Maurey method depends on the range of the parameter $u$ (see \cite[Lemma 8.3]{rauhut2010compressive}), which is the reason why we employ different bounds (\eqref{eqn: operator norm width 1} and \eqref{eqn: operator norm width 2}) when computing the entropy integrals $I_1$ and $I_2$.

We now combine the results \eqref{eqn: K2 bound 1} and \eqref{eqn: K2 bound 2} to estimate the entropy integral $I_2$: we apply the first bound for $0<u\leq\frac{1}{10}\sqrt{\frac{1}{m}}$, and the second bound for $\frac{1}{10}\sqrt{\frac{1}{m}}<u\leq d_{\ttwo}(\vset)=\frac{1}{2}\sqrt{\frac{k}{m}}$. It reveals that
\begin{align}\label{eqn: I2 bound}
I_2\lesssim \sqrt{\frac{k}{m}}\log \tn \log k.
\end{align}
Combine \eqref{eqn: gamma2 bound}, \eqref{eqn: I1 bound} and \eqref{eqn: I2 bound}
\begin{align}
\gamma_2(\vset,\|\cdot\|_2)&\lesssim \sqrt{\frac{s\tn}{m}}\mu(\tB)(\log s)(\log \tn)\nonumber\\
                           &\quad\quad +\sqrt{\frac{k}{m}}\log \tn \log k.
\end{align}

Finally, we are ready to complete the proof by applying Proposition \ref{prop: bounds on chaos processes}. For the assumption on $m$ and $p$, $\delta\in(0,1)$,
\begin{align*}
m&\geq c_1\delta^{-2}s\tn\mu^2(\tB)\log^2 s\log^2 \tn\\
m&\geq c_2\delta^{-2}k\log^2 k\log^2 \tn,
\end{align*}
we have, by \eqref{eqn: operator norm width 1},
\begin{align*}
d_{F}(\vset)=d_{\ttwo}(\vset)\lesssim \frac{\delta}{\log s\log \tn},\quad\gamma_2(\vset,\|\cdot\|_2)\lesssim \delta.
\end{align*}
By substituting the above results into Proposition \ref{prop: bounds on chaos processes} (let $t=\log s\log \tn$), one obtains
\begin{align}
\Prob(\sup_{\v\in\vset} |\langle\v,\bxi\rangle|\leq c\delta)\geq 1-\exp(-\log^2 s\log^2 \tn).
\end{align}
The proof is completed by incorporating the constant $c$ into $c_1$, $c_2$.

\section{Proof of Theorem \ref{thm: skRIP of rdn unitary mtx}}\label{app: proof of skRIP of rdn unitary mtx}
Recall that in the measurement model $\y=\A\xs+\H\zs+\w$, $\A=\sqrt{\frac{n}{M}}\mR_{\Omega'}\G$ is a randomly sub-sampled unitary matrix and $\H\in\C^{M\times M}$ is a unitary matrix with $\mu(\H)\sim 1/\sqrt{M}$.

The following Lemma from \cite{rudelson2008sparse} is needed.
\begin{lemma}[Theorem 3.3 \cite{rudelson2008sparse}]\label{lem: rudelson08}
For the matrix $\A=\sqrt{\frac{n}{m}}\mR_{\Omega'}\G$, if for $\delta\in(0, 1)$,
\begin{align}
m\geq c\delta^{-2}s\log^4 n,
\end{align}
then with probability at least $1-n^{-\log^3 n}$ the restricted isometry constant $\delta_s$ of $\A$ satisfies $\delta_s\leq\delta$.
\end{lemma}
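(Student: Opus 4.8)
The plan is to reproduce, for the ordinary restricted isometry constant, the suprema-of-chaos-process strategy already used for Lemma \ref{lem: useful lemma bound the inner product}. Starting from the variational form \eqref{eqn: ric dfn}, I would write $\ric=\sup_{\x\in\sset}\big|\|\A\x\|_2^2-\|\x\|_2^2\big|$ and first cast the summand as a \emph{centered, linear} process in the Bernoulli selectors. Since $\G$ is unitary, $\sum_i|\langle\G_{(i,:)},\x\rangle|^2=\|\x\|_2^2$, and writing $\mR_{\Omega'}^*\mR_{\Omega'}=\diag(\llambda)$ with $\Prob(\lambda_i=1)=m/n$ gives $\|\A\x\|_2^2-\|\x\|_2^2=\frac{n}{m}\sum_i\big(\lambda_i-\tfrac{m}{n}\big)|\langle\G_{(i,:)},\x\rangle|^2=\langle\bxi,\v_{\x}\rangle$, where $\bxi$ collects the recentered selectors and $\v_{\x}$ is the deterministic vector with entries $\frac{n}{m}|\langle\G_{(i,:)},\x\rangle|^2$. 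Thus $\ric=\sup_{\x\in\sset}|\langle\bxi,\v_{\x}\rangle|$ is exactly of the form $\sup_{\S\in\mset}\|\S\bxi\|_2$ governed by Theorem \ref{thm: bounds on chaos processes}, with $\mset=\{\v_{\x}^*:\x\in\sset\}$ a set of one-row matrices indexed by the sparse unit ball.

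Second, I would reduce the problem to estimating the geometry of $\mset$. Since $\mset$ is a set of vectors, $d_F(\mset)=d_{\ttwo}(\mset)=\sup_{\x\in\sset}\|\v_{\x}\|_2$; using $\mu(\G)\sim 1/\sqrt n$ together with $\|\x\|_0\le s$, the Cauchy--Schwarz bound $\max_i|\langle\G_{(i,:)},\x\rangle|^2\le s\mu^2(\G)$ and $\sum_i|\langle\G_{(i,:)},\x\rangle|^2\le 1$ yield $d_{\ttwo}(\mset)\lesssim\sqrt{s/m}$, mirroring \eqref{eqn: operator norm width 1}. It then remains to control the $\gamma_2$-functional $\gamma_2(\mset,\opmetric)$, after which Proposition \ref{prop: bounds on chaos processes} delivers the tail.

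Third --- and this is the crux --- I would bound $\gamma_2(\mset,\opmetric)$ through the Dudley integral $\gamma_2(\mset,\opmetric)\lesssim\int_0^{d_{\ttwo}(\mset)}\sqrt{\log N(\mset,\opmetric,u)}\,\mathrm{d}u$, splitting the covering number into two regimes exactly as in Appendix \ref{appdix: proof of useful lemma}: a volumetric estimate over the $\binom{n}{s}$ admissible supports (\cite[Proposition 10.1]{rauhut2010compressive}) for small $u$, and the Maurey empirical method (\cite[Lemma 8.3]{rauhut2010compressive}) for large $u$. Paralleling \eqref{eqn: I1 bound}, this gives $\gamma_2(\mset,\opmetric)\lesssim\sqrt{s/m}\,\log s\,\log n$. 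It is precisely this chaining step that produces the linear-in-$s$ scaling, evading the suboptimal $s^2$ rate that a naive matrix-Bernstein-plus-union-bound over supports would yield. Feeding $E=\gamma_2(\mset,\opmetric)+d_F(\mset)\lesssim\sqrt{s/m}\log s\log n$ and $d_{\ttwo}(\mset)\lesssim\sqrt{s/m}$ into Proposition \ref{prop: bounds on chaos processes} with $t\sim\log^2 n$ forces the dominant requirement $d_{\ttwo}(\mset)\,t\lesssim\delta$, i.e. $\sqrt{s/m}\,\log^2 n\lesssim\delta$, which is $m\gtrsim\delta^{-2}s\log^4 n$, while the resulting failure probability is $\exp(-t^2)=\exp(-\log^4 n)=n^{-\log^3 n}$, exactly as claimed.

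I expect two delicate points. The primary obstacle is the $\gamma_2$/entropy estimate: the two-regime covering argument and the careful bookkeeping of the polylogarithmic powers are what actually determine the final $\log^4 n$ exponent (and their sharpening is what reduces it to $\log^3 n$ in \cite{krahmer2014suprema} and later work). The secondary subtlety is that the recentered selectors $\bxi$ are bounded, hence $L$-subgaussian, but with $L$ degrading as the sampling rate $m/n$ shrinks; to apply Theorem \ref{thm: bounds on chaos processes} with an absolute constant I would first symmetrize the process with independent Rademacher signs, after which the clean subgaussian structure needed by the chaining bound is restored, and the scaling $\sqrt{n/m}$ is absorbed into $d_{\ttwo}(\mset)$ and $\gamma_2(\mset,\opmetric)$ rather than into the constant.
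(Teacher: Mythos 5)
First, a framing point: the paper does not prove this lemma at all --- it is imported verbatim from Rudelson--Vershynin \cite{rudelson2008sparse} --- so your proposal is not paralleling anything in the paper but attempting to re-derive the external result. Your reduction is sound as far as it goes: by unitarity of $\G$, $\|\A\x\|_2^2-\|\x\|_2^2=\frac{n}{m}\sum_i\bigl(\lambda_i-\frac{m}{n}\bigr)|\langle\G_{(i,:)},\x\rangle|^2$, so $\delta_s$ is the supremum of a linear process in the recentered selectors, and your bookkeeping ($d_{\ttwo}(\mset)\lesssim\sqrt{s/m}$, $\gamma_2(\mset,\|\cdot\|_2)\lesssim\sqrt{s/m}\,\log s\log n$, $t=\log^2 n$) would indeed reproduce $m\gtrsim\delta^{-2}s\log^4 n$ with failure probability $n^{-\log^3 n}$ --- \emph{if} Theorem \ref{thm: bounds on chaos processes} applied with an absolute constant. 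A secondary, fixable omission: the map $\x\mapsto\v_{\x}$ is quadratic in $\x$, so $N(\mset,\|\cdot\|_2,u)$ is not the covering number of a linear image and the volumetric/Maurey estimates do not apply directly; one must first dominate the metric, e.g.\ via $\|\v_{\x}-\v_{\x'}\|_2\leq 2\sqrt{n/m}\,\|\G(\x-\x')\|_{\infty}$ (using $\|\G\x\|_2,\|\G\x'\|_2\leq1$), and only then cover in the $\|\G\cdot\|_{\infty}$ seminorm.

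The genuine gap is the point you relegate to a ``secondary subtlety.'' The variance-normalized selector $d_i=(\lambda_i-p)/\sqrt{p(1-p)}$ with $p=m/n$ has subgaussian norm of order $\sqrt{n/m}$ up to logarithms, and the constant $C=C(L)$ in Theorem \ref{thm: bounds on chaos processes} grows polynomially in $L$. Substituting $L^2\sim n/m$ back into your final inequality turns the requirement into a bound of the form $m\gtrsim\delta^{-1}\sqrt{ns}\,\mathrm{polylog}(n)$, which is strictly worse than $s\log^4 n$ in the interesting regime $m\ll n$; the failure is quantitative, not cosmetic. Your proposed repair --- symmetrization by Rademacher signs --- does not restore an absolute $L$: multiplying by signs leaves the magnitude distribution, hence the ratio of the $\psi_2$ norm to the $L_2$ norm, unchanged, and if instead you condition on $\llambda$ and bound the conditional Rademacher metric crudely via $(\lambda_i-p)^2\leq1$ you land at the same $\sqrt{ns}$-type scaling. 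What actually closes this in \cite{rudelson2008sparse} (and in the treatment of \cite{rauhut2010compressive}) is the step your sketch omits entirely: after symmetrization, the conditional Dudley integral is controlled by a quantity involving $\sup_{\x\in\sset}\|\mR_{\Omega'}\G\x\|_2\leq\sqrt{\tfrac{m}{n}(1+\delta_s)}$, i.e.\ by the unknown $\delta_s$ itself, which yields a self-bounding inequality of the shape $\mE\,\delta_s\lesssim\sqrt{s\log^4 n/m}\,\sqrt{1+\mE\,\delta_s}$ that must then be solved, with a separate deviation argument to pass from the expectation to the stated tail. Without this bootstrap (or an equivalent moment-method argument), your chaining plan cannot reach $m\gtrsim\delta^{-2}s\log^4 n$ as claimed.
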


The $(s,k)$-RIP associated with $\TTheta=[\A,\  \H]$ can be  bounded by
\begin{align}\label{eqn: ripc decomposition}
\skripc&\leq \underbrace{\sup_{(\x,\z)\in\mT}\left|\|\A\x\|_2^2-\|\x\|_2^2\right|}_{\delta_1} +\underbrace{2\sup_{(\x,\z)\in\mT}|\langle\A\x,\H\z\rangle|}_{\delta_2},
\end{align}
where $\mT:=\{(\x,\z):\|\x\|_2^2+\|\z\|_2^2=1,\|\x\|_0\leq s,\|\z\|_{0}\leq k, \x\in\R^{n}, \z\in\R^{m}\}$.

By Lemma \ref{lem: rudelson08}, we have $\delta_1\leq\delta/2$ holds with probability $1-n^{-\log^3 n}$ for any $\delta\in(0,1)$ if $m\geq c\delta^{-2}s\log^4 n$.

Define a random vector $\d\in\R^n$ with i.i.d. entries satisfying $\lambda_i=\sqrt{\frac{m(n-m)}{n^2}}d_i+\frac{m}{n}$. Assume $\LLambda=\diag(\llambda)$ and $\H^*\mR_{\Omega'}=\U'$. We have,
\begin{align*}
\delta_2 &=2\sup_{(\x,\z)\in\mT}|\langle\sqrt{\frac{n}{M}}\mR_{\Omega'}\G\x,\H\z\rangle|\\
             &=2\sup_{(\x,\z)\in\mT}\left|\sqrt{\frac{n}{M}}\z^*\U'\LLambda\G\x\right|\\
             &=2\sup_{(\x,\z)\in\mT}\left|\sqrt{\frac{n}{M}}\z^*\U'\diag(\G\x)\llambda\right|\\
             &\leq 2\underbrace{\sup_{(\x,\z)\in\mT}\left|\frac{1}{2}\sqrt{\frac{n}{M}}\z^*\U'\diag(\G\x)\d\right|}_{t_1}\\
             &\quad +2\underbrace{\sup_{(\x,\z)\in\mT}\left|\sqrt{\frac{m^2}{Mn}}\z^*\U'\G\x\right|}_{t_2},
\end{align*}
where the last inequality is due to the fact that $\sqrt{\frac{m(n-m)}{n^2}}\leq\frac{1}{2}$ for any $m\leq n$.

Since $\llambda$ is a random Bernoulli vector with i.i.d. entries, by construction $\d$ is a length-$n$ random vector with independent, zero-mean, unit-variance, and $L$-subgaussian entries. Hence, the bound for $t_1$ can be formulated as the supremum of a stochastic process with the index $\mathcal{A}_{\r}$, where $\r=\sqrt{\frac{n}{M}}\z^*\U'\diag(\G\x)$ and $\mathcal{A}_{\r}:=\{\r:\|\x\|_2^2+\|\z\|_2^2=1,\|\x\|_0\leq s,\|\z\|_{0}\leq k\}$. For any $\r\in\mathcal{A}_{\r}$,
\begin{align*}
\|\r\|_2&=\sqrt{\frac{n}{M}}\|\z^{*}\U'\diag(\G\x)\|_2 \\
			&\leq \sqrt{\frac{n}{M}}\|\z^{*}\H^*\mR_{\Omega'}\|_2\|\G\x\|_{\infty}\\
        &=\sqrt{\frac{n}{M}}\|\z\|_2\max_{j\in[n]}\{|\langle \G_{(j,:)},\x\rangle|\}\nonumber\\
        &\leq\frac{1}{2}\sqrt{\frac{n}{M}}\mu(\G)\sqrt{s}, \nonumber\\
\|\r\|_2 &= \sqrt{\frac{n}{M}}\|\z^{*}\U'\diag(\G\x)\|_2\\
            &\leq \sqrt{\frac{n}{M}}\|\z^{*}\H^*\|_{\infty}\|\mR_{\Omega'}\G\x\|_{2} \\
            &\leq \sqrt{\frac{n}{M}}\mu(\H)\sqrt{k}\|\G_{(\Omega', :)}\x\|_{2} \\
            &\leq \frac{1}{2}\sqrt{n}\mu(\G)\mu(\H)\sqrt{k}.
\end{align*}
Therefore,
\begin{align*}
d_F(\mathcal{A}_{\r})&\leq \frac{1}{2}\sqrt{\frac{n}{M}}\mu(\G)\sqrt{s},\\
d_F(\mathcal{A}_{\r})&\leq \frac{1}{2}\sqrt{n}\mu(\G)\mu(\H)\sqrt{k}. \\
\end{align*}

By following the same proof steps as in Appendix \ref{appdix: proof of useful lemma}, we have
\begin{align}
\Prob(\sup_{\r\in\mathcal{A}_{\r}}|\langle\r,\d\rangle|\leq c\delta)\geq 1-\exp(-\log^2 s\log^2 n)
\end{align}
provided that
\begin{align*}
M&\geq c \delta^{-2}s n \mu^2(\G)\log^2 s\log^2 n \\
M&\geq c \delta^{-2}k n \mu^2(\G)\log^2 k\log^2 n.
\end{align*}

Bernstein's inequality \cite[Theorem A.1.13]{alon2004probabilistic} gives, for any $\nu>0$,
\begin{align}
\Prob(M>(1-\nu)m)\geq 1-\exp\left(-\frac{m\nu^2}{2}\right).
\end{align}

Hence, if
\begin{align*}
m &\geq \frac{1}{1-\nu} c \delta^{-2}s n \mu^2(\G)\log^2 s\log^2 n \\
m &\geq \frac{1}{1-\nu} c \delta^{-2}k n \mu^2(\G)\log^2 k\log^2 n,
\end{align*}
then
\begin{align*}
\Prob(\sup_{\r\in\mathcal{A}_{\r}}|\langle\r,\d\rangle|\leq c\delta)&\geq 1-\exp(-\log^2 s\log^2 n)\\
&\quad\quad -\exp\left(-\frac{m\nu^2}{2}\right).
\end{align*}

By assuming $m\geq 2c' \log n$ and $\nu=\sqrt{\frac{2c' \log n}{m}}$, the above probability of success can be written as
\begin{align}
\Prob(\sup_{\r\in\mathcal{A}_{\r}}|\langle\r,\d\rangle|\leq c\delta)\geq 1-n^{-\log^2 s\log n}-n^{-c'}.
\end{align}

For the second term, we have
\begin{align*}
t_2&=\sup_{(\x,\z)\in\mT}\left|\frac{m^2}{Mn}\z^*\U'\G\x\right|\\
     &\leq \frac{m^2}{Mn}\|\z\|_2\|\x\|_2\\
     &\leq \frac{1}{2}\frac{m^2}{Mn},
\end{align*}
where the last inequality is due to $\|\x\|_2^2+\|\z\|_2^2=1$. Therefore, $t_2\leq \delta/2$ for any $\delta\in(0,1)$ if $M\sim m$ and $m\leq \delta n$. By Bernstein's inequality, this condition can be satisfied with probability exceeding $1-n^{-c'}$ as long as $m\leq c\delta^2 n$. Theorem \ref{thm: skRIP of rdn unitary mtx} is proved by combining the above results.

\bibliographystyle{IEEEtran}
\bibliography{bibfile}

\begin{thebibliography}{10}
\providecommand{\url}[1]{#1}
\csname url@samestyle\endcsname
\providecommand{\newblock}{\relax}
\providecommand{\bibinfo}[2]{#2}
\providecommand{\BIBentrySTDinterwordspacing}{\spaceskip=0pt\relax}
\providecommand{\BIBentryALTinterwordstretchfactor}{4}
\providecommand{\BIBentryALTinterwordspacing}{\spaceskip=\fontdimen2\font plus
\BIBentryALTinterwordstretchfactor\fontdimen3\font minus
  \fontdimen4\font\relax}
\providecommand{\BIBforeignlanguage}[2]{{%
\expandafter\ifx\csname l@#1\endcsname\relax
\typeout{** WARNING: IEEEtran.bst: No hyphenation pattern has been}%
\typeout{** loaded for the language `#1'. Using the pattern for}%
\typeout{** the default language instead.}%
\else
\language=\csname l@#1\endcsname
\fi
#2}}
\providecommand{\BIBdecl}{\relax}
\BIBdecl

\bibitem{EJC05robustuncertainty}
E.~Candes, J.~Romberg, and T.~Tao, ``Robust uncertainty principles: {E}xact
  signal reconstruction from highly incomplete frequency information,''
  \emph{IEEE Trans. Inf. Theory}, vol.~52, no.~2, pp. 489--509, Feb 2006.

\bibitem{EJC06nearoptimal}
E.~Candes and T.~Tao, ``Near-optimal signal recovery from random projections:
  Universal encoding strategies?'' \emph{IEEE Trans. Inf. Theory}, vol.~52,
  no.~12, pp. 5406 --5425, Dec. 2006.

\bibitem{Donoho06compressedsensing}
D.~L. Donoho, ``Compressed sensing,'' \emph{IEEE Trans. Inf. Theory}, vol.~52,
  no.~4, pp. 1289--1306, Apr. 2006.

\bibitem{eldar2012compressed}
Y.~C. Eldar and G.~Kutyniok, \emph{Compressed sensing: theory and
  applications}.\hskip 1em plus 0.5em minus 0.4em\relax Cambridge University
  Press, 2012.

\bibitem{foucart2013mathematical}
S.~Foucart and H.~Rauhut, \emph{A mathematical introduction to compressive
  sensing}.\hskip 1em plus 0.5em minus 0.4em\relax Springer, 2013.

\bibitem{rauhut2010compressive}
H.~Rauhut, ``Compressive sensing and structured random matrices,''
  \emph{Theoretical foundations and numerical methods for sparse recovery},
  vol.~9, pp. 1--92, 2010.

\bibitem{zhang2014modulated}
P.~Zhang, L.~Gan, S.~Sun, and C.~Ling, ``Modulated unit-norm tight frames for
  compressed sensing,'' \emph{IEEE Trans. Signal Process.}, vol.~63, no.~15,
  pp. 3974--3985, Aug 2015.

\bibitem{bourgain2014improved}
J.~Bourgain, ``An improved estimate in the restricted isometry problem,'' in
  \emph{Geometric Aspects of Functional Analysis}.\hskip 1em plus 0.5em minus
  0.4em\relax Springer, 2014, pp. 65--70.

\bibitem{haviv2017restricted}
I.~Haviv and O.~Regev, ``The restricted isometry property of subsampled fourier
  matrices,'' in \emph{Geometric Aspects of Functional Analysis}.\hskip 1em
  plus 0.5em minus 0.4em\relax Springer, 2017, pp. 163--179.

\bibitem{li2013compressed}
X.~Li, ``Compressed sensing and matrix completion with constant proportion of
  corruptions,'' \emph{Constructive Approximation}, vol.~37, no.~1, pp. 73--99,
  2013.

\bibitem{Nguyen2013exact}
N.~Nguyen and T.~Tran, ``Exact recoverability from dense corrupted observations
  via $l_1$-minimization,'' \emph{IEEE Trans. Inf. Theory}, vol.~59, no.~4, pp.
  2017--2035, April 2013.

\bibitem{candes2011probabilistic}
E.~J. Candes and Y.~Plan, ``A probabilistic and {R}{I}{P}less theory of
  compressed sensing,'' \emph{IEEE Trans. Inf. Theory}, vol.~57, no.~11, pp.
  7235--7254, 2011.

\bibitem{adcock2017compressed}
B.~Adcock, A.~Bao, A.~Narayan \emph{et~al.}, ``Compressed sensing with sparse
  corruptions: Fault-tolerant sparse collocation approximations,'' \emph{arXiv
  preprint arXiv:1703.00135}, 2017.

\bibitem{krahmer2014suprema}
F.~Krahmer, S.~Mendelson, and H.~Rauhut, ``Suprema of chaos processes and the
  restricted isometry property,'' \emph{Communications on Pure and Applied
  Mathematics}, 2014.

\bibitem{haupt2008compressed}
J.~Haupt, W.~U. Bajwa, M.~Rabbat, and R.~Nowak, ``Compressed sensing for
  networked data,'' \emph{IEEE Signal Process. Mag.}, vol.~25, no.~2, pp.
  92--101, 2008.

\bibitem{franke2009orden}
C.~Franke and M.~Gertz, ``{O}{R}{D}{E}{N}: {O}utlier region detection and
  exploration in sensor networks,'' in \emph{Proceedings of the 2009 ACM SIGMOD
  International Conference on Management of data}.\hskip 1em plus 0.5em minus
  0.4em\relax ACM, 2009, pp. 1075--1078.

\bibitem{zhang2010outlier}
Y.~Zhang, N.~Meratnia, and P.~Havinga, ``Outlier detection techniques for
  wireless sensor networks: A survey,'' \emph{Commun. Surveys Tuts.}, vol.~12,
  no.~2, pp. 159--170, 2010.

\bibitem{charbi2010compressive}
Z.~Charbiwala, S.~Chakraborty, S.~Zahedi, Y.~Kim, M.~Srivastava, T.~He, and
  C.~Bisdikian, ``Compressive oversampling for robust data transmission in
  sensor networks,'' in \emph{INFOCOM, 2010 Proceedings IEEE}, March 2010, pp.
  1--9.

\bibitem{li2010systematic}
Z.~Li, F.~Wu, and J.~Wright, ``On the systematic measurement matrix for
  compressed sensing in the presence of gross errors,'' in \emph{Data
  Compression Conference (DCC), 2010}.\hskip 1em plus 0.5em minus 0.4em\relax
  IEEE, 2010, pp. 356--365.

\bibitem{laska2009exact}
J.~N. Laska, M.~Davenport, R.~G. Baraniuk \emph{et~al.}, ``Exact signal
  recovery from sparsely corrupted measurements through the pursuit of
  justice,'' in \emph{Signals, Systems and Computers, 2009 Conference Record of
  the Forty-Third Asilomar Conference on}.\hskip 1em plus 0.5em minus
  0.4em\relax IEEE, 2009, pp. 1556--1560.

\bibitem{studer2012recovery}
C.~Studer, P.~Kuppinger, G.~Pope, and H.~Bolcskei, ``Recovery of sparsely
  corrupted signals,'' \emph{IEEE Trans. Inf. Theory}, vol.~58, no.~5, pp.
  3115--3130, 2012.

\bibitem{popovic1991synthesis}
B.~M. Popovic, ``Synthesis of power efficient multitone signals with flat
  amplitude spectrum,'' \emph{IEEE Trans. Commun.}, vol.~39, no.~7, pp.
  1031--1033, 1991.

\bibitem{davis1999peak}
J.~A. Davis and J.~Jedwab, ``Peak-to-mean power control in {O}{F}{D}{M},
  {G}olay complementary sequences, and {R}eed-{M}uller codes,'' \emph{IEEE
  Trans. Inf. Theory}, vol.~45, no.~7, pp. 2397--2417, 1999.

\bibitem{levitt1985fh}
B.~Levitt, ``{F}{H}/{M}{F}{S}{K} performance in multitone jamming,'' \emph{IEEE
  J. Sel. Areas Commun.}, vol.~3, no.~5, pp. 627--643, 1985.

\bibitem{berger2010sparse}
C.~R. Berger, S.~Zhou, J.~C. Preisig, and P.~Willett, ``Sparse channel
  estimation for multicarrier underwater acoustic communication: From subspace
  methods to compressed sensing,'' \emph{IEEE Trans. Signal Process.}, vol.~58,
  no.~3, pp. 1708--1721, 2010.

\bibitem{haupt2010toeplitz}
J.~Haupt, W.~U. Bajwa, G.~Raz, and R.~Nowak, ``Toeplitz compressed sensing
  matrices with applications to sparse channel estimation,'' \emph{IEEE Trans.
  Inf. Theory}, vol.~56, no.~11, pp. 5862--5875, 2010.

\bibitem{meng2012compressive}
J.~Meng, W.~Yin, Y.~Li, N.~T. Nguyen, and Z.~Han, ``Compressive sensing based
  high-resolution channel estimation for {O}{F}{D}{M} system,'' \emph{IEEE J.
  Sel. Topics Signal Process.}, vol.~6, no.~1, pp. 15--25, 2012.

\bibitem{li2013convolutional}
K.~Li, L.~Gan, and C.~Ling, ``Convolutional compressed sensing using
  deterministic sequences,'' \emph{IEEE Trans. Signal Process.}, vol.~61,
  no.~3, pp. 740--752, 2013.

\bibitem{umehara2006performance}
D.~Umehara, H.~Nishiyori, and Y.~Morihiro, ``Performance evaluation of
  {C}{M}{F}{B} transmultiplexer for broadband power line communications under
  narrowband interference,'' in \emph{Power Line Communications and Its
  Applications, 2006 IEEE International Symposium on}.\hskip 1em plus 0.5em
  minus 0.4em\relax IEEE, 2006, pp. 50--55.

\bibitem{gomaa2011sparsity}
A.~Gomaa and N.~Al-Dhahir, ``A sparsity-aware approach for {N}{B}{I} estimation
  in {M}{I}{M}{O}-{O}{F}{D}{M},'' \emph{IEEE Trans. Wireless Commun.}, vol.~10,
  no.~6, pp. 1854--1862, 2011.

\bibitem{romberg2009compressive}
J.~Romberg, ``Compressive sensing by random convolution,'' \emph{SIAM J.
  Imaging Sciences}, vol.~2, no.~4, pp. 1098--1128, 2009.

\bibitem{puy2012universal}
G.~Puy, P.~Vandergheynst, R.~Gribonval, and Y.~Wiaux, ``Universal and efficient
  compressed sensing by spread spectrum and application to realistic {F}ourier
  imaging techniques,'' \emph{EURASIP J. Advances in Signal Processing}, vol.
  2012, no.~1, pp. 1--13, 2012.

\bibitem{tropp2010beyond}
J.~A. Tropp, J.~N. Laska, M.~F. Duarte, J.~K. Romberg, and R.~G. Baraniuk,
  ``Beyond {N}yquist: Efficient sampling of sparse bandlimited signals,''
  \emph{IEEE Trans. Inf. Theory}, vol.~56, no.~1, pp. 520--544, 2010.

\bibitem{romberg2010sparse}
J.~Romberg and R.~Neelamani, ``Sparse channel separation using random probes,''
  \emph{Inverse Problems}, vol.~26, no.~11, p. 115015, 2010.

\bibitem{slavinsky2011compressive}
J.~P. Slavinsky, J.~N. Laska, M.~A. Davenport, and R.~G. Baraniuk, ``The
  compressive multiplexer for multi-channel compressive sensing,'' in
  \emph{IEEE Int. Conf. Acoustics, Speech and Signal Processing (ICASSP)},
  2011, pp. 3980--3983.

\bibitem{rivenson2010single}
Y.~Rivenson, A.~Stern, and B.~Javidi, ``Single exposure super-resolution
  compressive imaging by double phase encoding,'' \emph{Optics express},
  vol.~18, no.~14, pp. 15\,094--15\,103, 2010.

\bibitem{li2015compressiveoptical}
J.~Li, J.~S. Li, Y.~Y. Pan, and R.~Li, ``Compressive optical image
  encryption,'' \emph{Scientific reports}, vol.~5, 2015.

\bibitem{herman2009high}
M.~A. Herman and T.~Strohmer, ``High-resolution radar via compressed sensing,''
  \emph{IEEE Trans. Signal Process.}, vol.~57, no.~6, pp. 2275--2284, 2009.

\bibitem{marcia2008fast}
R.~F. Marcia, C.~Kim, J.~Kim, D.~J. Brady, and R.~M. Willett, ``Fast
  disambiguation of superimposed images for increased field of view,'' in
  \emph{Image Processing, 2008. ICIP 2008. 15th IEEE International Conference
  on}.\hskip 1em plus 0.5em minus 0.4em\relax IEEE, 2008, pp. 2620--2623.

\bibitem{pham2012improved}
D.-S. Pham and S.~Venkatesh, ``Improved image recovery from compressed data
  contaminated with impulsive noise,'' \emph{IEEE Trans. Image Process.},
  vol.~21, no.~1, pp. 397--405, 2012.

\bibitem{filipovic2014reconstruction}
M.~Filipovi{\'c}, ``Reconstruction of sparse signals from highly corrupted
  measurements by nonconvex minimization,'' in \emph{IEEE International
  Conference on Acoustics, Speech and Signal Processing (ICASSP)}, 2014.

\bibitem{saab2008stable}
R.~Saab, R.~Chartrand, and O.~Yilmaz, ``Stable sparse approximations via
  nonconvex optimization,'' in \emph{Acoustics, Speech and Signal Processing,
  2008. ICASSP 2008. IEEE International Conference on}.\hskip 1em plus 0.5em
  minus 0.4em\relax IEEE, 2008, pp. 3885--3888.

\bibitem{chartrand2008iteratively}
R.~Chartrand and W.~Yin, ``Iteratively reweighted algorithms for compressive
  sensing,'' in \emph{IEEE International Conference on Acoustics, Speech and
  Signal Processing (ICASSP)}.\hskip 1em plus 0.5em minus 0.4em\relax IEEE,
  2008, pp. 3869--3872.

\bibitem{Foygel2014corrupted}
R.~Foygel and L.~Mackey, ``Corrupted sensing: {N}ovel guarantees for separating
  structured signals,'' \emph{IEEE Trans. Inf. Theory}, vol.~60, no.~2, pp.
  1223--1247, Feb 2014.

\bibitem{chen2013robust}
Y.~Chen, C.~Caramanis, and S.~Mannor, ``Robust sparse regression under
  adversarial corruption,'' in \emph{Proceedings of the 30th International
  Conference on Machine Learning (ICML-13)}, 2013, pp. 774--782.

\bibitem{studer2014stable}
C.~Studer and R.~G. Baraniuk, ``Stable restoration and separation of
  approximately sparse signals,'' \emph{Applied and Computational Harmonic
  Analysis}, vol.~37, no.~1, pp. 12--35, 2014.

\bibitem{pope2013probabilistic}
G.~Pope, A.~Bracher, and C.~Studer, ``Probabilistic recovery guarantees for
  sparsely corrupted signals,'' \emph{IEEE Trans. Inf. Theory}, vol.~59, no.~5,
  pp. 3104--3116, May 2013.

\bibitem{wright2009robust}
J.~Wright, A.~Y. Yang, A.~Ganesh, S.~S. Sastry, and Y.~Ma, ``Robust face
  recognition via sparse representation,'' \emph{IEEE Trans. Pattern Anal.
  Mach. Intell.}, vol.~31, no.~2, pp. 210--227, 2009.

\bibitem{laska2011democracy}
J.~N. Laska, P.~T. Boufounos, M.~A. Davenport, and R.~G. Baraniuk, ``Democracy
  in action: Quantization, saturation, and compressive sensing,'' \emph{Applied
  and Computational Harmonic Analysis}, vol.~31, no.~3, pp. 429--443, 2011.

\bibitem{chiconvex}
Y.~Chi, ``Convex relaxations of spectral sparsity for robust super-resolution
  and line spectrum estimation,'' in \emph{Wavelets and Sparsity XVII}, vol.
  10394.\hskip 1em plus 0.5em minus 0.4em\relax International Society for
  Optics and Photonics, 2017, p. 103941G.

\bibitem{fernandez2016demixing}
C.~Fernandez-Granda, G.~Tang, X.~Wang, and L.~Zheng, ``Demixing sines and
  spikes: Robust spectral super-resolution in the presence of outliers,''
  \emph{arXiv preprint arXiv:1609.02247}, 2016.

\bibitem{cvx}
M.~Grant and S.~Boyd, ``{CVX}: Matlab software for disciplined convex
  programming, version 2.1,'' \url{http://cvxr.com/cvx}, Mar. 2014.

\bibitem{gb08}
------, ``Graph implementations for nonsmooth convex programs,'' in
  \emph{Recent Advances in Learning and Control}, ser. Lecture Notes in Control
  and Information Sciences, V.~Blondel, S.~Boyd, and H.~Kimura, Eds.\hskip 1em
  plus 0.5em minus 0.4em\relax Springer-Verlag Limited, 2008, pp. 95--110,
  \url{http://stanford.edu/~boyd/graph_dcp.html}.

\bibitem{talagrand2005generic}
M.~Talagrand, \emph{The generic chaining}.\hskip 1em plus 0.5em minus
  0.4em\relax Springer, 2005, vol. 154.

\bibitem{eftekhari2012restricted}
A.~Eftekhari, H.~L. Yap, C.~J. Rozell, and M.~B. Wakin, ``The restricted
  isometry property for random block diagonal matrices,'' \emph{arXiv preprint
  arXiv:1210.3395}, 2012.

\bibitem{rudelson2008sparse}
M.~Rudelson and R.~Vershynin, ``On sparse reconstruction from {F}ourier and
  {G}aussian measurements,'' \emph{Communications on Pure and Applied
  Mathematics}, vol.~61, no.~8, pp. 1025--1045, 2008.

\bibitem{alon2004probabilistic}
N.~Alon and J.~H. Spencer, \emph{The probabilistic method}.\hskip 1em plus
  0.5em minus 0.4em\relax John Wiley \& Sons, 2004.

\end{thebibliography}
\end{document}